\newtheorem{proposition}{Proposition}
\newtheorem{conjecture}{Conjecture}
\newtheorem{lemma}{Lemma}[]
\newtheorem{remark}{Remark}[]
\newtheorem{example}{Example}
\title{Using Fano factors to determine certain types of gene autoregulation}
\author[1,2,*]{Yue Wang}
\author[3]{Siqi He}
\affil[1]{Department of Computational Medicine, University of California, Los Angeles, California, United States of America}
\affil[2]{Institut des Hautes \'Etudes Scientifiques, Bures-sur-Yvette, Essonne, France}
\affil[3]{Simons Center for Geometry and Physics, Stony Brook University, Stony Brook, New York, United States of America}
\affil[*]{E-mail address: yuew@g.ucla.edu (Y. W.). ORCID: 0000-0001-5918-7525}
\date{}                                           % Activate to display a given date or no date
\begin{document}
	\maketitle

\begin{abstract}
	The expression of one gene might be regulated by its corresponding protein, which is called autoregulation. Although gene regulation is a central topic in biology, autoregulation is much less studied. In general, it is extremely difficult to determine the existence of autoregulation with direct biochemical approaches. Nevertheless, some papers have observed that certain types of autoregulations are linked to noise levels in gene expression. We generalize these results by two propositions on discrete-state continuous-time Markov chains. These two propositions form a simple but robust method to infer the existence of autoregulation in certain scenarios from gene expression data. This method only depends on the Fano factor, namely the ratio of variance and mean of the gene expression level. Compared to other methods for inferring autoregulation, our method only requires non-interventional one-time data, and does not need to estimate parameters. Besides, our method has few restrictions on the model. We apply this method to four groups of experimental data and find some genes that might have autoregulation. Some inferred autoregulations have been verified by experiments or other theoretical works.
\end{abstract}

\smallskip
\noindent \textbf{Keywords.} 

\noindent inference; gene expression; autoregulation; Markov chain.

\

\noindent \textbf{Frequently used abbreviations:} 

\noindent GRN: gene regulatory network.

\noindent VMR: variance-to-mean ratio

\section{Introduction}
\label{intro}

In general, genes are transcribed to mRNAs and then translated to proteins. We can use the abundance of mRNA or protein to represent the expression levels of genes. Both the synthesis and degradation of mRNAs/proteins can be affected (activated or inhibited) by the expression levels of other genes \cite{karamyshev2018lost}, which is called (mutual) gene regulation. Genes and their regulatory relations form a gene regulatory network (GRN) \cite{cunningham2015mechanisms}, generally represented as a directed graph: each vertex is a gene, and each directed edge is a regulatory relationship. See Fig.~\ref{grn} for an example of a GRN.

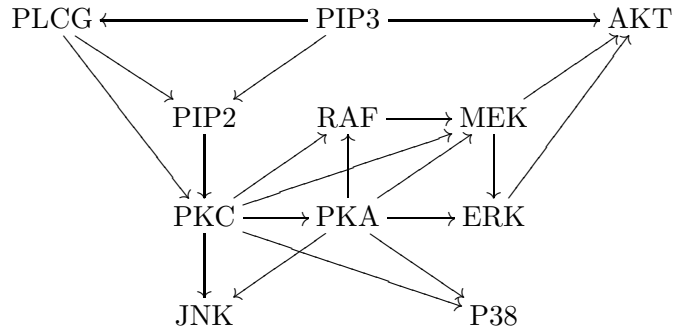
\begin{figure}
	\center
	$\xymatrix{
		\text{PLCG}\ar[rd]\ar[rdd]&&\text{PIP3}\ar[ll]\ar[ld]\ar[rr]&&\text{AKT}\\
		&\text{PIP2}\ar[d]&\text{RAF}\ar[r]&\text{MEK}\ar[ru]\ar[d]&\\
		&\text{PKC}\ar[ru]\ar[rru]\ar[r]\ar[d]\ar[rrd]&\text{PKA}\ar[u]\ar[ru]\ar[r]\ar[rd]\ar[ld]&\text{ERK}\ar[ruu]&\\
		&\text{JNK}&&\text{P38}&
	}$\\
	\caption{An example of a GRN in human T cells \cite{werhli2006comparative}. Each vertex is a gene. Each arrow is a regulatory relationship. Notice that it has no directed cycle.}
	\label{grn}
\end{figure}

The expression of one gene could promote/repress its own expression, which is called positive/negative autoregulation \cite{carrier1999investigating}. Autoregulation is very common in \emph{E. coli} \cite{shen2002network}. Positive autoregulation is also called autocatalysis or autoactivation, and negative autoregulation is also called autorepression \cite{baumdick2018conformational,fang2017sirt7}. For instance, HOX proteins form and maintain spatially inhomogeneous expression of HOX genes \cite{sheth2014self}. For genes with position-specific expressions during development, it is common that the increase of one gene can further increase or decrease its level \cite{wang2020biological}. Autoregulation has the effect of stabilizing transposons in genomes \cite{bouuaert2013autoregulation}, which can affect cell behavior \cite{kang2015flexibility,wang2023longest}. Autoregulation can also stabilize the cell phenotype \cite{barros2011cdx2}, which is related to cancer development \cite{zhou2014multi,niu2015phenotypic,chen2016overshoot,wang2023multiple}.

While countless works infer the regulatory relationships between different genes (the GRN structure) \cite{wang2022inference}, determining the existence of autoregulation is an equally important yet less-studied field. Due to technical limitations, it is difficult and sometimes impossible to directly detect autoregulation in experiments. Instead, we can measure gene expression profiles and infer the existence of autoregulation. In this paper, we consider a specific data type: measure the expression levels of certain genes without intervention for a single cell (which reaches stationarity) at a single time point, and repeat for many different cells to obtain a probability distribution for expression levels. Such single-cell non-interventional one-time gene expression data can be obtained with a relatively low cost \cite{luecken2019current}.

With such single-cell level data for one gene $V$, we can calculate the ratio of variance and mean of the expression level (mRNA or protein count). This quantity is called the variance-to-mean ratio (VMR) or the Fano factor. Many papers that study gene expression systems with autoregulations have found that negative autoregulation can decrease noise (smaller VMR), and positive autoregulation can increase noise (larger VMR) \cite{thattai2001intrinsic,swain2004efficient,hornos2005self,munsky2012using,gronlund2013transcription,dessalles2017stochastic,czuppon2018limits}. This means VMR can be used to infer the existence of autoregulation. 

We generalize the above observation and develop two mathematical results that use VMR to determine the existence of autoregulation. They apply to some genes that have autoregulation. For genes without autoregulation, these results cannot determine that autoregulation does not exist. We apply these results to four experimental gene expression data sets and detect some genes that might have autoregulation. 

We start with some setup and introduce our main results (Section~\ref{setup}). Then we cite some previous works on this topic and compare them with our results (Section~\ref{related}). For a single gene that is not regulated by other genes (Section~\ref{auto}) and multiple genes that regulate each other (Section~\ref{multi}), we develop mathematical results to identify the existence of autoregulation. These two mathematical sections can be skipped. We summarize the procedure of our method and apply it to experimental data (Section~\ref{app}). We finish with some conclusions and discussions (Section~\ref{con}).

\section{Setup and main results}
\label{setup}
One possible mechanism of ``the increase of one gene's expression level further increases its expression level'' is a positive feedback loop between two genes \cite{hui2020increased}. Here $V_1$ and $V_2$ promote each other, so that the increase of $V_1$ increases $V_2$, which in return further increases $V_1$. We should not regard this feedback loop as autoregulation. When we define autoregulation for a gene $V$, we should fix environmental factors and other genes that regulate $V$, and observe whether the expression level of $V$ can affect itself. If $V$ is in a feedback loop that contains other genes, then those genes (which regulate $V$ and are regulated by $V$) cannot be fixed when we change $V$. Therefore, it is essentially difficult to determine whether $V$ has autoregulation in this scenario. In the following, we need to assume that $V$ is not contained in a feedback loop that involves other genes.

The actual gene expression mechanism might be complicated. Besides other genes/factors that can regulate a gene, for a gene $V$ itself, it might switch between inactivated (off) and activated (on) states \cite{cao2020analytical}. These states correspond to different transcription rates to produce mRNAs. When mRNAs are translated into proteins, those proteins might affect the transition of gene activation states, which forms autoregulation \cite{firman2018maximum}. See Fig.~\ref{mech} for an illustration. Therefore, for a gene $V$, we should regard the gene activation state, mRNA count, and protein count as a triplet of random variables $(G,M,P)$, which depend on each other. 

\begin{figure}
	\center
	$\xymatrix{
		\text{inactivated gene}\ar[rd]\ar@/^/[dd]&&\\
		&\text{mRNA}\ar[r]&\text{protein}\ar@{-->}@/_2pc/[ll]\\
		\text{activated gene}\ar[ru]\ar@/^/[uu]&
	}$\\
	\caption{The mechanism of gene expression. A gene might switch between inactivated state and activated state, which correspond to different transcription rates. Gene is transcribed into mRNAs, which are translated into proteins. Proteins might (auto)regulate the state transition of the corresponding gene.}
	\label{mech}
\end{figure}
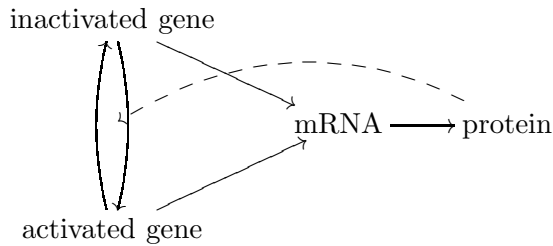

When we fix environmental factors and other genes that affect $V$, the triplet $(G,M,P)$ should follow a continuous-time Markov chain. A possible state is the gene activation state on/off (for $G$), the mRNA count on $\mathbb{Z}$ (for $M$), and the protein count on $\mathbb{Z}$ (for $P$). Thus the total space is $\{0,1\}\times \mathbb{Z} \times \mathbb{Z}$. When we consider the expression level $M$ or $P$ (but have no access to the value of $G$), sometimes itself is Markovian (its dynamics can be fully determined by itself, without the knowledge of $G$), and we call this scenario ``\textbf{autonomous}''. In other cases, $M$ or $P$ itself is no longer Markovian (its dynamics explicitly depends on $G$), and we call this scenario ``\textbf{non-autonomous}''. We need to consider the triplet $(G,M,P)$ in the non-autonomous scenario. This is similar to a hidden Markov model, where a two-dimensional Markov chain is no longer Markovian if projected to one dimension (since this dimension depends on the other dimension).

For the autonomous scenario, we can fully classify autoregulation for a gene $V$. Assume environmental factors and other genes that affect the expression of $V$ are kept at constants. Define the expression level (mRNA count for example) of one cell to be $X=n$, the mRNA synthesis rate at $X=n-1$ to be $f_n$, and the degradation rate for each mRNA molecule at $X=n$ to be $g_n$. This is a standard continuous-time Markov chain on $\mathbb{Z}$ with transition rates 
\[\frac{1}{\Delta t}\mathbb{P}[X(t+\Delta t)=n\mid X(t)=n-1]=f_n,\]
\[\frac{1}{\Delta t}\mathbb{P}[X(t+\Delta t)=n-1\mid X(t)=n]=ng_n.\]
Define the relative growth rate $h_n=f_n/g_n$. If there is \textbf{no autoregulation}, then $h_n$ is a constant. \textbf{Positive autoregulation} means $h_n>h_{n-1}$ for some $n$, so that $f_n>f_{n-1}$ and/or $g_n<g_{n-1}$; \textbf{negative autoregulation} means $h_n<h_{n-1}$ for some $n$, so that $f_n<f_{n-1}$ and/or $g_n>g_{n-1}$. Notice that we can have $h_n>h_{n-1}$ for some $n$ and $h_{n'}<h_{n'-1}$ for some other $n'$, meaning that positive autoregulation and negative autoregulation can both exist for the same gene, but occur at different expression levels. Such non-monotonicity in regulating gene expression often appear in reality \cite{angelini2022model}.

For the non-autonomous scenario, we can still define autoregulation. Consider the expression level $X$ of $V$ (mRNA count or protein count) and its interior factor $I$. If $X$ is the mRNA count, then $I$ is the gene state; if $X$ is the protein count, then $I$ is the gene state and the mRNA count. If there is \textbf{no autoregulation}, then $X$ cannot affect $I$, and for each value of $I$, the relative growth rate $h_n$ of $X$ is a constant. If $X$ can affect $I$, or $h_n$ is not a constant, then there is \textbf{autoregulation}. When $X$ can affect $I$, there is a directed cycle ($X\to I\to X$), and the change of $X$ can affect itself through $I$. In this case, it is not always easy to distinguish between positive autoregulation and negative autoregulation.

Quantitatively, for the autonomous scenario, when we fix other factors that might regulate this gene $V$, if $V$ has no autoregulation, then $h_n=f_n/g_n$ is a constant $h$ for all $n$. In this case, the stationary distribution of $V$ satisfies $\mathbb{P}(X=n)/\mathbb{P}(X=n-1)=h/n$, meaning that the distribution is Poissonian with parameter $h$, $\mathbb{P}(X=n)=h^ne^{-h}/n!$, and $\text{VMR}=1$. If there exists positive autoregulation of certain forms, $\text{VMR}>1$; if there exists negative autoregulation of certain forms, $\text{VMR}<1$. However, such results are derived by assuming that $f_n,g_n$ take certain functional forms, such as linear functions \cite{paulsson2005models,ramos2015gene}, quadratic functions \cite{giovanini2020comparative}, or Hill functions \cite{stewart2013under}. There are other papers that consider Markov chain models in gene expression/regulation \cite{jia2017simplification,sharma2014markov,shmulevich2003steady,chen2020limit,shen2019distributed,ko2019markov}, but the role of VMR is not thoroughly studied.

In this paper, we generalize the above result of inferring autoregulation with VMR by dropping the restrictions on parameters. Consider a gene $V$ in a known GRN, and assume it is not regulated by other genes, or assume other factors that regulate $V$ are fixed. Assume we have the \textbf{autonomous} scenario, meaning that its expression level $X=n$ satisfies a general Markov chain with synthesis rate $f_n$ and per molecule degradation rate $g_n$. We do not add any restrictions on $f_n$ and $g_n$. Use the single-cell non-interventional one-time gene expression data to calculate the VMR of $V$. Proposition~\ref{prop2} states that $\text{VMR}>1$ or $\text{VMR}<1$ means the existence of positive/negative autoregulation.

Nevertheless, the autonomous condition requires some assumptions, and often does not hold in reality \cite{bokes2012multiscale,jia2017emergent,jia2020kinetic,jia2017simplification}. Consider a gene $V$ that is not regulated by other genes, and has no autoregulation. The mRNA count or the protein count is regulated by the gene activation state (an interior factor), which cannot be fixed. Due to this non-controllable factor, there might be transcriptional bursting \cite{shahrezaei2008analytical,dobrinic2021prc1} or translational bursting \cite{cagnetta2019noncanonical}, where transcription or translation can occur in bursts, and we have $\text{VMR}>1$. This does not mean that Proposition~\ref{prop2} is wrong. Instead, it means that the expression level itself is not Markovian, and the scenario is non-autonomous. In this scenario, we should apply Proposition~\ref{np}, described below, which states that no autoregulation means $\text{VMR}\ge 1$.

We extend the idea of inferring autoregulation with VMR to a gene that is regulated by other genes, or with non-autonomous expression. Consider a gene $V'$ in a known GRN. Denote other genes that regulate $V'$ and the interior factors (gene state and/or mRNA count) of $V'$ by $\boldsymbol{F}$. Denote the values of $V',\boldsymbol{F}$ as $X,\boldsymbol{Y}$. Assume $V'$ is not contained in a feedback loop, and assume $g_n$, the per molecule degradation rate of $V'$ at $X=n$, is not regulated by other genes or its interior factors (gene state and/or mRNA count). We do not add any restrictions on the synthesis rate $f_n$. Proposition~\ref{np} states that if $V'$ has no autoregulation, then $\text{VMR}(X)\ge 1$. Therefore, $\text{VMR}(X)<1$ means autoregulation for $V'$. 

Proposition~\ref{np} is derived in a ``one-step'' Markov chain model, where at one time point, only transitions to the nearest neighbors are allowed: $(X=n,\boldsymbol{Y}=\boldsymbol{a})\to (X=n+1,\boldsymbol{Y}=\boldsymbol{a})$, $(X=n,\boldsymbol{Y}=\boldsymbol{a})\to (X=n-1,\boldsymbol{Y}=\boldsymbol{a})$, and $(X=n,\boldsymbol{Y}=\boldsymbol{a})\to (X=n,\boldsymbol{Y}=\boldsymbol{a}')$. This one-step Markov chain model is the most common approach in stochastic representations of gene regulation \cite{thattai2001intrinsic,hornos2005self,paulsson2005models,munsky2012using,czuppon2018limits}. Recently, there are some studies that consider ``multi-step'' Markov chain models, where at one time point, the change of mRNA/protein count can be accompanied with the change of other factors, such as the gene state \cite{braichenko2021distinguishing,karmakar2021effect,voliotis2008fluctuations}. For example, the following transition is allowed: $(G^*,M=n)\to (G,M=n+1)$. In this multi-step model, Proposition~\ref{np} is no longer valid: even without autoregulation, it is possible that $\text{VMR}(X)<1$. Consider an example that the production of one mRNA molecule needs many steps of gene state transition, and the gene returns to the initial step after producing one mRNA molecule: $G_1\to G_2\to\cdots\to G_k\to G_1+M$, $M\to \emptyset$. Since there are many steps, the total time for one cycle of $G_1\to\cdots\to G_k\to G_1+M$ can be highly deterministic, such as $1$ second. Assume the degradation probability for each mRNA molecule in $1$ second is $0.01$. Then the mRNA count is highly concentrated near $100$, and $\text{VMR}(X)<1$ (close to $0.5$ in numerical simulations).

Since multi-step models allow more transitions, they are more general than one-step models. However, it is still a question that whether such generalizations are necessary, since one-step models have good fitting with experimental data \cite{jia2017stochastic,dessalles2017stochastic,cao2018linear}. Proposition~\ref{np} provides a method to verify this problem: If a gene has $\text{VMR}(X)<1$, but we use other methods to determine that it has no autoregulation, then Proposition~\ref{np} states that one-step models deviate from reality, and multi-step models should be adopted. Therefore, when one-step models hold, Proposition~\ref{np} is a valid method to determine the existence of autoregulation; when one-step models do not hold, combined with other methods to determine autoregulation, Proposition~\ref{np} can detect the failure of one-step models.

In the scenario that Proposition~\ref{np} may apply, if $\text{VMR}\ge 1$, Proposition~\ref{np} cannot determine whether autoregulation exists. In fact, with VMR, or even the full probability distribution, we might not distinguish a non-autonomous system with autoregulation from a non-autonomous system without autoregulation, which both have $\text{VMR}\ge 1$ \cite{cao2018linear}. In the non-autonomous scenario, we only focus on the less complicated case of $\text{VMR}<1$, and derive Proposition~\ref{np} that firmly links VMR and autoregulation.

In reality, Proposition~\ref{prop2} and Proposition~\ref{np} can only apply to a few genes (which are not regulated by other genes or have $\text{VMR}<1$), and they cannot determine negative results. Thus the inference results about autoregulation are a few ``yes'' and many ``we do not know''. Besides, for the results inferred by Proposition~\ref{prop2}, especially those with $\text{VMR}>1$ (positive autoregulation), we cannot verify whether their expression is autonomous, and the inference results are less reliable. 

Current experimental methods can hardly determine the existence of autoregulation, and to determine that a gene does not have autoregulation is even more difficult. Therefore, about whether genes in a GRN have autoregulation, experimentally, we do not have ``yes'' or ``no'', but a few ``yes'' and many ``we do not know''. Thus there is no gold standard to thoroughly evaluate the performance of our inference results. We can only report that some genes inferred by our method to have autoregulation are also verified by experiments or other inference methods to have autoregulation. Besides, if the result by Proposition~\ref{np} does not match with other methods, it is possible that the one-step model fails. Instead, in Section~\ref{sim}, we test our methods with numerical simulations, and the performances of both Propositions are satisfactory.

\section{Related works}
\label{related}

There have been some results of inferring  autoregulation with VMR \cite{thattai2001intrinsic,swain2004efficient,hornos2005self,munsky2012using,gronlund2013transcription,dessalles2017stochastic,czuppon2018limits}. However, these VMR-based methods have various restrictions on the model, and some of them are derived by applying linear noise approximations, which are not always reliable in gene regulatory networks \cite{thomas2013reliable}.

Besides VMR-based methods, there are other mathematical approaches to infer the existence of autoregulation in gene expression \cite{sanchez2018bayesian,xing2005causal,feigelman2016analysis,veerman2021parameter,jia2018relaxation,zhou2012analytical,jia2020small,jia2020dynamical}. We introduce some works and compare them with our method. (\textbf{A}) Sanchez-Castillo et al. \cite{sanchez2018bayesian} considered an autoregressive model for multiple genes. This method (1) needs time series data; (2) requires the dynamics to be linear; (3) estimates a group of parameters. (\textbf{B}) Xing et al. \cite{xing2005causal} applied causal inference to a complicated gene expression model. This method (1) needs promoter sequences and information on transcription factor binding sites; (2) requires linearity for certain steps; (3) estimates a group of parameters. (\textbf{C}) Feigelman et al. \cite{feigelman2016analysis} applied a Bayesian method for model selection. This method (1) needs time series data; (2) estimates a group of parameters. (\textbf{D}) Veerman et al. \cite{veerman2021parameter} considered the probability-generating function of a propagator model. This method (1) needs time series data; (2) estimates a group of parameters; (3) needs to approximate a Cauchy integral. (\textbf{E}) Jia et al. \cite{jia2018relaxation} compared the relaxation rate with degradation rate. This method (1) needs interventional data; (2) only works for a single gene that is not regulated by other genes; (3) requires that the per molecule degradation rate is a constant.

Compared to other more complicated methods, VMR-based methods (including ours) have two advantages: (1) VMR-based methods use non-interventional one-time data. Time series data require measuring the same cell multiple times without killing it, and interventional data require some techniques to interfere with gene expression, such as gene knockdown. Therefore, non-interventional one-time data used in VMR-based methods are much easier and cheaper to obtain. (2) VMR-based methods do not estimate parameters, and only calculate the mean and variance of the expression level. Some other methods need to estimate many parameters or approximate some complicated quantities, meaning that they need large data size and high data accuracy. Therefore, our method is easy to calculate, and need lower data accuracy and smaller data size. 

Compared to other VMR-based methods, our method has few restrictions on the model, making them applicable to various scenarios with different dynamics. Besides, our derivations do not use any approximations. 

In sum, compared to other VMR-based methods, our method is universal. Compared to other non-VMR-based methods, our method is simple, and has lower requirements on data quality.

Compared to other non-VMR-based methods, our method has some disadvantages: (1) The GRN structure needs to be known. (2) Our method does not work for certain genes, depending on regulatory relationships. Proposition~\ref{prop2} only works for a gene that is not regulated by other genes, and we require its expression to be autonomous; Proposition~\ref{np} only works for a gene that is not in a feedback loop. (3) Proposition~\ref{np} requires the per molecule degradation rate to be a constant, and it cannot provide information about autoregulation if $\text{VMR}\ge 1$. (4) Our method only works for cells at equilibrium. Thus time series data that contain time-specific information cannot be utilized other than treated as one-time data. With just the stationary distribution, sometimes it is impossible to build the causal relationship (including autoregulation) \cite{wang2020causal}. Thus with this data type, some disadvantages are inevitable. Such impossibility results might be generalized to other data types or even other fields \cite{wang2022impossibility}.

\section{Scenario of a single isolated gene}
\label{auto}

\subsection{Setup}
We first consider the expression level (e.g., mRNA count) of one gene $V$ in a single cell. At the single-cell level, gene expression is essentially stochastic, and we do not further consider differential equation approaches \cite{wang2022modelling} dynamical system approaches with deterministic \cite{wang2020model} or stochastic \cite{ye2016stochastic} operators. We use a random variable $X$ to represent the mRNA count of $V$. We assume $V$ is not in a feedback loop. We also assume all environmental factors and other genes that can affect $X$ are kept at constant levels, so that we can focus on $V$ alone. This can be achieved if no other genes point to gene $V$ in the GRN, such as PIP3 in Fig.~\ref{grn}. Then we assume that the expression of $V$ is autonomous, thus $X$ satisfies a time-homogeneous Markov chain defined on $\mathbb{Z}^*$. 

Assume that the mRNA synthesis rate at $X(t)=n-1$, namely the transition rate from $X=n-1$ to $X=n$, is $f_n\ge 0$. Assume that with $n$ mRNA molecules, the degradation rate for each mRNA molecule is $g_n>0$. Then the overall degradation rate at $X(t)=n$, namely the transition rate from $X=n$ to $X=n-1$, is $g_nn$. The associated master equation is 
\begin{equation}
	\begin{split}
		\frac{\mathrm{d}\mathbb{P}[X(t)=n]}{\mathrm{d}t}=&\mathbb{P}[X(t)=n+1]g_{n+1}(n+1)+\mathbb{P}[X(t)=n-1]f_n\\
		&-\mathbb{P}[X(t)=n](f_{n+1}+g_nn).
	\end{split}
	\label{eq0}
\end{equation}
When $f_n,g_n$ take specific forms, this master equation also corresponds to a branching process, so that related techniques can be applied \cite{jiang2017phenotypic}. Define the relative growth rate $h_n=f_n/g_n$.  We assume that as time tends to infinity, the process reaches equilibrium, where (1) the stationary probability distribution $P_n=\lim_{t\to\infty}\mathbb{P}[X(t)=n]$ exists, and $P_n=P_{n-1}h_n/n$; (2) the mean $\lim_{t\to\infty}\mathbb{E}[X(t)]$ and the variance $\lim_{t\to\infty}\sigma^2[X(t)]$ are finite. Such requirements can be satisfied under simple assumptions, such as assuming $h_n$ has a finite upper bound \cite{norris1998markov,wang2022discrete}.

If $h_n>h_{n-1}$ for some $n$, then there exists positive autoregulation. If $h_n<h_{n-1}$ for some $n$, then there exists negative autoregulation. If there is no autoregulation, then  $h_n$ is a constant $h$, and the stationary distribution is Poissonian with parameter $h$. In this setting, positive autoregulation and negative autoregulation might coexist, meaning that $h_{n+1}<h_n$ for some $n$ and $h_{n'+1}>h_{n'}$ for some $n'$.

\subsection{Theoretical results}

With single-cell non-interventional one-time gene expression data for one gene, we have the stationary distribution of the Markov chain $X$. We can infer the existence of autoregulation with the VMR of $X$, defined as $\text{VMR}(X)=\sigma^2(X)/\mathbb{E}(X)$. The idea is that if we let $f_n$ increase/decrease with $n$, and control $g_n$ to make $\mathbb{E}(X)$ invariant, then the variance $\sigma^2(X)$ increases/decreases \cite[Section 2.5.1]{wang2018some}. We shall prove that $\text{VMR}>1$ implies the occurrence of positive autoregulation, and $\text{VMR}<1$ implies the occurrence of negative autoregulation. Notice that $\text{VMR}>1$ does not exclude the possibility that negative autoregulation exists for some expression level. This also applies to $\text{VMR}<1$ and positive autoregulation.

We can illustrate this result with a linear model: 
\begin{example}
	Consider a Markov chain that satisfies Eq.~\ref{eq0}, and set $f_n=k+b(n-1)$, $g_n=c$. Here $b$ (can be positive or negative) is the strength of autoregulation, and $c$ satisfies $c>0$ and $c-b>0$. We can calculate that $\text{VMR}=1+b/(c-b)$ (see Appendix~\ref{a0.1} for details). Therefore, $\text{VMR}>1$ means positive autoregulation, $b>0$; $\text{VMR}<1$ means negative autoregulation, $b<0$; $\text{VMR}=1$ means no autoregulation, $b=0$.
	\label{ex1}
\end{example}

\begin{lemma}
	Consider a Markov chain $X(t)$ that follows Eq.~\ref{eq0} with general transition coefficients $f_n,g_n$. Here $X(t)$ models the mRNA/protein count of one gene whose expression is autonomous. Calculate $\text{VMR}(X)$ at stationarity. (1) Assume $h_{n+1}\ge h_n$ for all $n$. We have $\text{VMR}(X)\ge 1$; moreover, $\text{VMR}(X)= 1$ if and only if $h_{n+1}= h_n$ for all $n$. (2) Assume $h_{n+1}\le h_n$ for all $n$. We have $\text{VMR}(X)\le 1$; moreover, $\text{VMR}(X)= 1$ if and only if $h_{n+1}= h_n$ for all $n$. 
	\label{lemma0}
\end{lemma}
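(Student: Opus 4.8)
The plan is to reduce both parts of the lemma to a single covariance inequality. I would first invoke the flux-balance relation for the birth--death chain at stationarity, already recorded above as $P_n=P_{n-1}h_n/n$, i.e.\ $nP_n=h_nP_{n-1}$ for $n\ge 1$. Multiplying this by $1$ and by $n$ and summing over $n\ge 1$, then shifting the index $n\mapsto n+1$, gives
\[
\mathbb{E}[X]=\sum_{n\ge 1}h_nP_{n-1}=\mathbb{E}[h_{X+1}],\qquad
\mathbb{E}[X^2]=\sum_{n\ge 1}n\,h_nP_{n-1}=\mathbb{E}[X h_{X+1}]+\mathbb{E}[h_{X+1}].
\]
These rearrangements are legitimate since the mean and variance are assumed finite at stationarity. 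Using $\mathbb{E}[h_{X+1}]=\mathbb{E}[X]$ in the second identity, subtracting $\mathbb{E}[X]^2$, and dividing by $\mathbb{E}[X]>0$ yields the key formula
\[
\text{VMR}(X)-1=\frac{\mathbb{E}[X h_{X+1}]-\mathbb{E}[X]\,\mathbb{E}[h_{X+1}]}{\mathbb{E}[X]}=\frac{\operatorname{Cov}\!\big(X,h_{X+1}\big)}{\mathbb{E}[X]}.
\]

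Thus the sign of $\text{VMR}(X)-1$ is exactly the sign of $\operatorname{Cov}(X,h_{X+1})$, and it remains to control this covariance by monotonicity. Under hypothesis (1), $h_{n+1}\ge h_n$ for all $n$ says that $n\mapsto h_{n+1}$ is non-decreasing, while $n\mapsto n$ is strictly increasing; under hypothesis (2) the first map is non-increasing. The cleanest route, which also delivers the equality case, is the independent-copy identity $2\operatorname{Cov}(X,h_{X+1})=\mathbb{E}\big[(X-X')(h_{X+1}-h_{X'+1})\big]$ for $X'$ an i.i.d.\ copy of $X$: in case (1) the integrand is pointwise $\ge 0$, so $\operatorname{Cov}(X,h_{X+1})\ge 0$ and $\text{VMR}(X)\ge 1$; in case (2) it is pointwise $\le 0$, so $\text{VMR}(X)\le 1$.

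For the equality statements, one direction is immediate: if $h_n\equiv h$ then $P_n=P_{n-1}h/n$ forces the stationary law to be Poisson$(h)$, hence $\text{VMR}(X)=1$. Conversely, $\text{VMR}(X)=1$ gives $\operatorname{Cov}(X,h_{X+1})=0$, so the pointwise-signed integrand $(X-X')(h_{X+1}-h_{X'+1})$ vanishes almost surely; since $n\mapsto n$ is \emph{strictly} monotone, this forces $h_{n+1}=h_{m+1}$ whenever $n\ne m$ lie in the support of $X$, i.e.\ $h_{n+1}$ is constant on the support. Finally I would check the support is all of $\mathbb{Z}^*$: from $P_n=P_0\prod_{k=1}^{n}h_k/k$ and $P_0>0$ (else all $P_n=0$), $\mathbb{E}[X]>0$ forces $h_1>0$, and the monotonicity hypothesis then propagates positivity to every $h_k$, so $P_n>0$ for all $n$; hence $h_{n+1}=h_n$ for all $n$.

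The index shifts and the interchange of summations are routine. The only delicate point is the equality case: one must identify the support of $X$ before promoting ``$h_{X+1}$ constant on the support'' to ``$h_n$ constant'', and this is where the monotonicity assumption is used a second time, beyond merely fixing the sign of the covariance. I do not anticipate a real obstacle, since the whole argument lives inside a one-dimensional birth--death chain and reduces to the elementary correlation inequality above.
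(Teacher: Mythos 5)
Your proposal is correct, and it takes a genuinely different and more unified route than the paper's proof. The paper first writes the stationary law explicitly as $P_n=e^{-\lambda}d_n/n!$ with $d_n=\prod_{i\le n}h_i$, then treats the two cases asymmetrically: part (1) by Cauchy--Schwarz together with the log-convexity $d_nd_{n+2}\ge d_{n+1}^2$, and part (2) by introducing the generating function $H(t)=\sum d_nt^n/n!$ and showing $H'/H$ is non-increasing via a pairwise rearrangement of coefficients, finally evaluating at $t=1$. Your single identity
\[
\text{VMR}(X)-1=\frac{\mathrm{Cov}(X,h_{X+1})}{\mathbb{E}[X]},
\]
read off directly from the flux balance $nP_n=h_nP_{n-1}$, collapses both cases into one application of Chebyshev's correlation inequality for comonotone (resp.\ antimonotone) functions. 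Note that the paper's target inequality $H(1)H''(1)\le[H'(1)]^2$ is literally $\mathbb{E}[Xh_{X+1}]\le\mathbb{E}[X]\,\mathbb{E}[h_{X+1}]$, and its pairwise rearrangement step $H'(t_1)H(t_2)\ge H'(t_2)H(t_1)$ is the same correlation inequality in disguise, applied to the coefficients rather than to the probability weights. Your route is shorter, avoids the generating-function detour and its convergence discussion, handles (1) and (2) symmetrically, and makes the mechanism transparent: the sign of $\text{VMR}-1$ is the sign of the correlation between the copy number and the relative growth rate at the current state. What the paper's computation buys in exchange is the explicit product formula for $P_n$ and a quantitative strict bound in the equality analysis, neither of which is essential.

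One small repair is needed in your equality case for part (2). You assert that monotonicity ``propagates positivity to every $h_k$,'' but when $h_{n+1}\le h_n$ the sequence can reach zero, truncating the support to some $\{0,\dots,N\}$. The conclusion still survives: if $N\ge1$ and $\mathrm{Cov}(X,h_{X+1})=0$, constancy of $h_{\cdot+1}$ on the support forces $h_{N+1}=h_N$, yet $P_{N+1}=0<P_N$ gives $h_{N+1}=0<h_N$, a contradiction; hence the support is all of $\mathbb{Z}^*$ and $h$ is constant everywhere. (The paper sidesteps this edge case by tacitly assuming $d_n>0$ for all $n$ in its proof.)
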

We can take negation of Lemma~\ref{lemma0} to obtain the following proposition.
\begin{proposition}
	\label{prop2}	
	In the setting of Lemma~\ref{lemma0}, (1) If $\text{VMR}(X)>1$, then there exists at least one value of $n$ for which $h_{n+1}>h_n$; thus this gene has positive autoregulation. (2) If $\text{VMR}(X)<1$, then there exists at least one value of $n$ for which $h_{n+1}<h_n$; thus this gene has negative autoregulation. (3) If $\text{VMR}(X)=1$, then either (A) $h_{n+1}=h_n$ for all $n$, meaning that this gene has no autoregulation; or (B) $h_{n+1}<h_n$ for one $n$ and $h_{n'+1}>h_{n'}$ for another $n'$, meaning that this gene has both positive and negative autoregulation (at different expression levels).
\end{proposition}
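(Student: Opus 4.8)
The plan is to obtain Proposition~\ref{prop2} as the purely logical contrapositive of Lemma~\ref{lemma0}; no new analytic input is required, so the entire task is to organize the case analysis on $\text{VMR}(X)$ so that the trichotomy $>1$, $<1$, $=1$ is handled exhaustively and the equality clauses of the lemma are used where needed.

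For part (1) I would argue by contradiction. Suppose $\text{VMR}(X)>1$ but $h_{n+1}\le h_n$ for every $n$. Then the hypothesis of Lemma~\ref{lemma0}(2) is met, giving $\text{VMR}(X)\le 1$, which contradicts the assumption. Hence there is at least one $n$ with $h_{n+1}>h_n$, and by the definition of autoregulation in Section~\ref{auto} this is precisely positive autoregulation at expression level $n$. Part (2) is the mirror image: if $\text{VMR}(X)<1$ and $h_{n+1}\ge h_n$ for all $n$, then Lemma~\ref{lemma0}(1) forces $\text{VMR}(X)\ge 1$, a contradiction, so some $n$ has $h_{n+1}<h_n$, i.e.\ negative autoregulation at level $n$.

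For part (3), suppose $\text{VMR}(X)=1$. If $h_{n+1}=h_n$ for all $n$ we are in case (A) and there is nothing more to show, so assume the sequence $(h_n)$ is non-constant; I then claim both a descent and an ascent must occur. Suppose toward a contradiction that $h_{n+1}\ge h_n$ for all $n$. Together with non-constancy, this meets the hypothesis of Lemma~\ref{lemma0}(1) while excluding its equality case, so $\text{VMR}(X)>1$, contradicting $\text{VMR}(X)=1$; hence $h_{n+1}<h_n$ for some $n$. Symmetrically, if $h_{n+1}\le h_n$ for all $n$, then Lemma~\ref{lemma0}(2) combined with non-constancy gives $\text{VMR}(X)<1$, again a contradiction, so $h_{n'+1}>h_{n'}$ for some $n'$. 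Together these give case (B): negative autoregulation at level $n$ and positive autoregulation at level $n'$.

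The only step that needs genuine care is part (3): one must invoke the ``if and only if'' clauses of Lemma~\ref{lemma0}, not merely the inequalities, in order to upgrade ``monotone in one direction plus non-constant'' to a strict statement about $\text{VMR}(X)$ and thereby contradict $\text{VMR}(X)=1$. Beyond that there is no analytic obstacle here, since all the work is already carried out in Lemma~\ref{lemma0}, which I take as given.
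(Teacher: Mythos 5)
Your proposal is correct and matches the paper's approach: the paper derives Proposition~\ref{prop2} simply "by taking negation of Lemma~\ref{lemma0}," which is exactly the contrapositive argument you spell out, including the use of the equality clauses for part (3). Your write-up is just a more explicit version of the same one-line deduction.
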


\begin{remark}
Results similar to Proposition~\ref{prop2} have been proven by Jia et al. in another model of expression for a single gene \cite{jia2017stochastic}. However, they require that $g_i=g_j$ for any $i,j$. Proposition~\ref{prop2} can handle arbitrary $g_i$, thus being novel.
\end{remark}
\begin{proof}[Proof of Lemma~\ref{lemma0}]
	Define $\lambda=-\log P_0$, so that $P_0=\exp(-\lambda)$. Define $d_n=\prod_{i=1}^{n}h_i>0$ and stipulate that $d_0=1$. We can see that 
	\[\frac{d_nd_{n+2}}{d_{n+1}^2}=\frac{h_{n+2}}{h_{n+1}}.\]
	Also, 
	\[P_n=P_{n-1}f_n/(g_n n)=P_{n-1}h_n/n=\cdots=P_0(\prod_{i=1}^{n}h_i)/n!\ =e^{-\lambda}\frac{d_n}{n!}.\]
	Then 
	\begin{equation*}
		\begin{split}
			\mathbb{E}(X^2)-\mathbb{E}(X)&=\sum_{n=1}^{\infty}(n^2-n)P_n=e^{-\lambda}\sum_{n=1}^{\infty}(n^2-n)\frac{d_n}{n!}\\
			&=e^{-\lambda}\sum_{n=2}^{\infty}\frac{d_n}{(n-2)!}=e^{-\lambda}\sum_{n=0}^{\infty}\frac{d_{n+2}}{n!},
		\end{split}
	\end{equation*}
	\[[\mathbb{E}(X)]^2=\left(\sum_{n=1}^{\infty}nP_n\right)^2=e^{-2\lambda}\left(\sum_{n=1}^{\infty}n\frac{d_n}{n!}\right)^2=e^{-2\lambda}\left(\sum_{n=0}^{\infty}\frac{d_{n+1}}{n!}\right)^2.\]
	Besides,
	\[1=\sum_{n=0}^{\infty}P_n=e^{-\lambda}\sum_{n=0}^{\infty}\frac{d_n}{n!}.\]
	Now we have 
	\[\mathbb{E}(X^2)-\mathbb{E}(X)-[\mathbb{E}(X)]^2=e^{-2\lambda}\left(\sum_{n=0}^{\infty}\frac{d_n}{n!}\right)\left(\sum_{n=0}^{\infty}\frac{d_{n+2}}{n!}\right)-e^{-2\lambda}\left(\sum_{n=0}^{\infty}\frac{d_{n+1}}{n!}\right)^2.\]
	
	(1) Assume $h_{n+1}\ge h_n$ for all $n$. Then 
	
	\begin{equation}
		\label{eqe}
		\begin{split}
			&\mathbb{E}(X^2)-\mathbb{E}(X)-[\mathbb{E}(X)]^2\\
			\ge & e^{-2\lambda}\left(\sum_{n=0}^{\infty}\frac{\sqrt{d_nd_{n+2}}}{n!}\right)^2- e^{-2\lambda}\left(\sum_{n=0}^{\infty}\frac{d_{n+1}}{n!}\right)^2\ge 0.
		\end{split}
	\end{equation}
	Here the first inequality is from the Cauchy-Schwarz inequality, and the second inequality is from $d_nd_{n+2}\ge d_{n+1}^2$ for all $n$. Then $\text{VMR}(X)=\{\mathbb{E}(X^2)-[\mathbb{E}(X)]^2\}/\mathbb{E}(X)\ge 1$. Equality holds if and only if $d_n/d_{n+2}=d_{n+1}/d_{n+3}$ for all $n$ (the first inequality of Eq.~\ref{eqe}) and $d_nd_{n+2}=d_{n+1}^2$ for all $n$ (the second inequality of Eq.~\ref{eqe}). The equality condition is equivalent to $h_{n+1}=h_n$ for all $n$.
	
	(2) Assume $h_{n+1}\le h_n$ for all $n$. Then $d_{n+2}/d_{n+1}\le d_{n+1}/d_n$, and $d_n\le h_1^n$ for all $n$. Define 
	\[H(t)=\sum_{n=0}^{\infty}\frac{d_n}{n!}t^n.\]
	Since $0<d_n\le h_1^n$, this series converges for all $t\in\mathbb{C}$, so that $H(t)$ is a well-defined analytical function on $\mathbb{C}$, and 
	\[H'(t)=\sum_{n=0}^{\infty}\frac{d_{n+1}}{n!}t^n,\ \text{ and }\ H''(t)=\sum_{n=0}^{\infty}\frac{d_{n+2}}{n!}t^n.\]
	In the following, we only consider $H(t),H'(t),H''(t)$ as real functions for $t\in\mathbb{R}$.
	
	To prove $\text{VMR}(X)\le 1$, we just need to prove $\mathbb{E}(X^2)-\mathbb{E}(X)-[\mathbb{E}(X)]^2=e^{-2\lambda}\{H(1)H''(1)-[H'(1)]^2\}\le 0$. However, we shall prove $H''(t)H(t)\le [H'(t)]^2$ for all $t\in \mathfrak{I}$, where $\mathfrak{I}=(a,b)$ is a fixed interval in $\mathbb{R}$ with $0<a<1$ and $1<b<\infty$. Thus $t=1$ is an interior point of $\mathfrak{I}$. Since $H(t),H'(t),H''(t)$ have positive lower bounds on $\mathfrak{I}$, the following statements are obviously equivalent: (i) $H''(t)H(t)\le [H'(t)]^2$ for all $t\in\mathfrak{I}$; (ii) $\{\log[H'(t)/H(t)]\}'\le 0$ for all $t\in\mathfrak{I}$; (iii) $\log[H'(t)/H(t)]$ is non-increasing on $\mathfrak{I}$; (iv) $H'(t)/H(t)$ is non-increasing on $\mathfrak{I}$. To prove (i), we just need to prove (iv).
	
	Consider any $t_1,t_2\in \mathfrak{I}$ with $t_1\le t_2$ and any $p,q\in \mathbb{N}$ with $p\ge q$. Since $d_{p+1}/d_p\le d_{q+1}/d_q$, and $t_1^{p-q}\le t_2^{p-q}$, we have 
	\[
	d_pd_qt_1^qt_2^q(\frac{d_{p+1}}{d_p}-\frac{d_{q+1}}{d_q})(t_1^{p-q}-t_2^{p-q})\ge 0,
	\]
	which means 
	\[
	d_{p+1}d_qt_1^pt_2^q+d_{q+1}d_p t_1^qt_2^p\geq d_{p+1}d_qt_2^pt_1^q+d_{q+1}d_pt_2^qt_1^p.
	\]
	Sum over all $p,q\in \mathbb{N}$ with $p\ge q$ to obtain
	\begin{equation*}
		\begin{split}
			H'(t_1)H(t_2)&=(\sum_{n=0}^{\infty}\frac{d_{n+1}}{n!}t_1^n)(\sum_{n=0}^{\infty}\frac{d_n}{n!}t_2^n)\\
			&\ge (\sum_{n=0}^{\infty}\frac{d_{n+1}}{n!}t_2^n)(\sum_{n=0}^{\infty}\frac{d_n}{n!}t_1^n)=H'(t_2)H(t_1).
		\end{split}
	\end{equation*}
	Thus $H'(t_1)/H(t_1)\ge H'(t_2)/H(t_2)$ for all $t_1,t_2\in \mathfrak{I}$ with $t_1\le t_2$. This means $H''(t)H(t)\le [H'(t)]^2$ for all $t\in\mathfrak{I}$, and $\text{VMR}(X)\le 1$.
	
	About the condition for the equality to hold, assume $h_{n'+1}<h_{n'}$ for a given $n'$. Then 
	\[
	d_{n'}d_{n'-1}t_1^{n'-1}t_2^{n'-1}(\frac{d_{n'+1}}{d_{n'}}-\frac{d_{n'}}{d_{n'-1}})(t_1-t_2)\ge C(t_2-t_1)
	\]
	for all $t_1,t_2\in \mathfrak{I}$ with $t_1\le t_2$ and a constant $C$ that does not depend on $t_1,t_2$. Therefore, 
	\begin{equation*}
		\begin{split}
			&[H'(t_1)/H(t_1)-H'(t_2)/H(t_2)]\cdot[H(t_1)H(t_2)]\\
			=&(\sum_{n=0}^{\infty}\frac{d_{n+1}}{n!}t_1^n)(\sum_{n=0}^{\infty}\frac{d_n}{n!}t_2^n)- (\sum_{n=0}^{\infty}\frac{d_{n+1}}{n!}t_2^n)(\sum_{n=0}^{\infty}\frac{d_n}{n!}t_1^n)\\
			\ge &d_{n'}d_{n'-1}t_1^{n'-1}t_2^{n'-1}(\frac{d_{n'+1}}{d_{n'}}-\frac{d_{n'}}{d_{n'-1}})(t_1-t_2)\\
			\ge& C(t_2-t_1).
		\end{split}
	\end{equation*}
	Since $H(t)$ has a finite positive upper bound $A$ and a positive lower bound $B$ on $\mathfrak{I}$, we have \[H'(t_1)/H(t_1)-H'(t_2)/H(t_2)\ge C(t_2-t_1)/A^2,\] 
	meaning that 
	\[\forall t\in\mathfrak{I},\,\,[H'(t)/H(t)]'=\{H(t)H''(t)-[H'(t)]^2\}/[H(t)]^2\le -C/A^2,\]
	and thus 
	\[\forall t\in\mathfrak{I},\,\,H(t)H''(t)-[H'(t)]^2\le -CB^2/A^2<0.\]
	Therefore, $\mathbb{E}(X^2)-\mathbb{E}(X)-[\mathbb{E}(X)]^2=e^{-2\lambda}\{H(1)H''(1)-[H'(1)]^2\}<0$, and $\text{VMR}(X)<1$. 
	
	We have proved in (1) that if $h_{n+1}=h_n$ for all $n$, then $\text{VMR}(X)=1$. Thus when $h_{n+1}\le h_n$ for all $n$, $\text{VMR}(X)=1$ if and only if $h_{n+1}=h_n$ for all $n$.
\end{proof}

In sum, for the Markov chain model of one gene (by assuming the expression to be autonomous), when we have the stationary distribution from single-cell non-interventional one-time gene expression data, we can calculate the VMR of $X$. $\text{VMR}(X)>1$ means the existence of positive autoregulation (while negative autoregulation might still be possible at different expression levels), and $\text{VMR}(X)<1$ means the existence of negative autoregulation (while positive autoregulation might still be possible at different expression levels). $\text{VMR}(X)=1$ means either (1) no autoregulation exists; or (2) both positive autoregulation and negative autoregulation exist (at different expression levels). In reality, many genes are non-autonomous, and transcriptional/translational bursting can make the VMR to be larger than $100$ \cite{paulsson2005models}. Since Proposition~\ref{prop2} does not apply to non-autonomous cases, such genes might not have autoregulations.

\section{Scenario of multiple entangled genes}
\label{multi}
\subsection{Setup}
We consider $m$ genes $V_1,\ldots,V_m$ for a single cell. Denote their expression levels by random variables $X_1,\ldots,X_m$. The change of $X_i$ can depend on $X_j$ (mutual regulation) and $X_i$ itself (autoregulation). Since these genes regulate each other, and their expression levels are not fixed, we cannot consider them separately. If the expression of gene $V_k$ is non-autonomous, we also need to add its interior factors (gene state and/or mRNA count) into $X_1,\ldots,X_m$.

We can use a continuous-time one-step Markov chain on $(\mathbb{Z}^*)^m$ to describe the dynamics. Each state of this Markov chain, $(X_1=n_1,\ldots,X_i=n_i,\ldots,X_m=n_m)$, can be abbreviated as $\boldsymbol{n}=(n_1,\ldots,n_i,\ldots,n_m)$. For gene $V_i$, the transition rate of $n_i-1\to n_i$ is $f_i(\boldsymbol{n})$, and the transition rate of $n_i\to n_i-1$ is $g_i(\boldsymbol{n})n_i$. Transitions with more than one step are not allowed. The master equation of this process is 
\begin{equation}
	\label{me}
	\begin{split}
		\frac{\mathrm{d}\mathbb{P}(\boldsymbol{n})}{\mathrm{d}t}=&\sum_i\mathbb{P}(n_1,\ldots,n_i+1,\ldots,n_m)g_i(n_1,\ldots,n_i+1,\ldots,n_m)(n_i+1)\\
		+&\sum_i\mathbb{P}(n_1,\ldots,n_i-1,\ldots,n_m)f_i(\boldsymbol{n})\\
		-&\mathbb{P}(\boldsymbol{n})\sum_i[f_i(n_1,\ldots,n_i+1,\ldots,n_m)+g_i(\boldsymbol{n})n_i].
	\end{split}		
\end{equation}	
Define $\boldsymbol{n}_{\bar{i}}=(n_1,\ldots,n_{i-1},n_{i+1},\ldots,n_m)$. Define $h_i(\boldsymbol{n})=f_i(\boldsymbol{n})/g_i(\boldsymbol{n})$ to be the relative growth rate of gene $V_i$. Autoregulation means for some fixed $\boldsymbol{n}_{\bar{i}}$,  $h_i(\boldsymbol{n})$ is (locally) increasing/decreasing  with $n_i$, thus $f_i(\boldsymbol{n})$ increases/decreases and/or $g_i(\boldsymbol{n})$ decreases/increases with $n_i$. For the non-autonomous scenario, another possibility for autoregulation is that $V_i$ can affect its interior factors (gene state and/or mRNA count).

\subsection{Theoretical results}
With expression data for multiple genes, there are various methods to infer the regulatory relationships between different genes, so that the GRN can be reconstructed \cite{wang2022inference}. In the GRN, if there is a directed path from gene $V_i$ to gene $V_j$, meaning that $V_i$ can directly or indirectly regulate $V_j$, then $V_i$ is an ancestor of $V_j$, and $V_j$ is a descendant of $V_i$.

Fix a gene $V_k$ in a GRN. We consider a simple case that $V_k$ is not contained in any directed cycle (feedback loop), which means no gene is both an ancestor and a descendant of $V_k$, such as PIP2 in Fig.~\ref{grn}. This means $V_k$ itself is a strongly connected component of the GRN. This condition is automatically satisfied if the GRN has no directed cycle. If the expression of $V_k$ is non-autonomous, we need to add the interior factors (gene state and/or mRNA count) of $V_k$ into $V_1,\ldots,V_m$, and it is acceptable that $V_k$ regulates its interior factors. In this case, if the one-step model holds, we can prove that if $V_k$ does not regulate itself, meaning that $h_k(\boldsymbol{n})$ is a constant for fixed $\boldsymbol{n}_{\bar{k}}$ and different $n_k$, and $X_k$ does not affect its interior factors (if non-autonomous), then $\text{VMR}(X_k)\ge 1$. The reason is that $\text{VMR}<1$ requires either a feedback loop or autoregulation. Certainly, $\text{VMR}<1$ might also mean that the one-step model fails. One intuition is to assume the transitions of $V_{\bar{k}}$ are extremely slow, so that $V_k$ is approximately the average of many Poisson variables. It is easy to verify that the average of Poisson variables has $\text{VMR}\ge 1$. We need to assume that the per molecule degradation rate $g_k(\cdot)$ for $V_k$ is not affected by $V_1,\ldots,V_m$, which is not always true in reality \cite{karamyshev2018lost}. With this result, when $\text{VMR}<1$, there might be autoregulation.
\begin{proposition}
	\label{np}
	Consider the one-step Markov chain model for multiple genes, described by Eq.~\ref{me}. Assume the GRN has no directed cycle, or at least there is no directed cycle that contains gene $V_k$. Assume $g_k(\cdot)$ is a constant for all $\boldsymbol{n}$. If $V_k$ has no autoregulation, meaning that $h_k(\cdot)$ and $f_k(\cdot)$ do not depend on $n_k$, and $V_k$ does not regulate its interior factors (gene state and/or mRNA count), then $V_k$ has $\text{VMR}\ge 1$. Therefore, $V_k$ has $\text{VMR}< 1$ means $V_k$ has autoregulation, or the one-step model fails.
\end{proposition}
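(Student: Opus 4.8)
The plan is to use the acyclicity hypothesis to split the chain into an autonomous ``driver'' and a conditionally simple ``driven'' coordinate, and then to recognise the stationary law of $X_k$ as a \emph{mixed Poisson} distribution, which is automatically overdispersed.

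\textbf{Step 1 (isolate an autonomous sub-chain).} Let $\boldsymbol{Y}(t)$ collect the interior factors of $V_k$ (gene state and/or mRNA count) together with all ancestors of $V_k$ in the GRN. I would first check that the generator of Eq.~\ref{me} restricted to the $\boldsymbol{Y}$-coordinates is closed and involves $n_k$ nowhere: the rates of an ancestor $V_j$ of $V_k$ depend only on the regulators of $V_j$, and if $V_k$ were among them we would get a directed cycle $V_k\to V_j\to\cdots\to V_k$, which is excluded; the rates of the interior factors do not involve $n_k$ because, by hypothesis, $V_k$ does not regulate its interior factors. Hence $\boldsymbol{Y}(t)$ is itself a continuous-time Markov chain, with a well-defined stationary law $\pi$ and stationary path measure. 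Moreover, since $V_k$ has no autoregulation, its synthesis rate depends only on its regulators (all ancestors) and its interior factors, so we may write $f_k(\boldsymbol n)=\lambda(t):=f_k(\boldsymbol Y(t))$, and by hypothesis $g_k$ is a constant.

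\textbf{Step 2 (conditional Poisson law).} Condition on a trajectory $\{\boldsymbol Y(s)\}_{s\le t}$. Because the model is one-step, $X_k$ changes only through its own transitions $n_k\to n_k\pm 1$ and never simultaneously with a jump of $\boldsymbol Y$; so, conditionally, $X_k$ is a time-inhomogeneous immigration--death process with immigration rate $\lambda(s)$ and constant per-capita death rate $g_k$. For its conditional probability generating function $G(z,t)=\mathbb{E}\!\left[z^{X_k(t)}\mid\{\boldsymbol Y(s)\}\right]$ the master equation yields the transport equation $\partial_t G+g_k(z-1)\partial_z G=\lambda(t)(z-1)G$, and the ansatz $G(z,t)=\exp\!\big(\Theta(t)(z-1)\big)$ solves it as soon as $\Theta'(t)=\lambda(t)-g_k\Theta(t)$, i.e.
\[
\Theta(t)=e^{-g_k(t-t_0)}\Theta(t_0)+\int_{t_0}^{t}e^{-g_k(t-s)}\,\lambda(s)\,\mathrm{d}s .
\]
Thus from any Poisson initial condition the conditional law stays Poisson; letting $t_0\to-\infty$ (the first term vanishes since $g_k>0$) and invoking uniqueness of the stationary distribution to identify the limit, I conclude that at stationarity
\[
X_k(t)\ \big|\ \{\boldsymbol Y(s)\}_{s\le t}\ \sim\ \mathrm{Poisson}(\Theta),\qquad \Theta=\int_{-\infty}^{t}e^{-g_k(t-s)}f_k(\boldsymbol Y(s))\,\mathrm{d}s\ \ge 0 .
\]

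\textbf{Step 3 (overdispersion) and the obstacles.} Unconditionally, $X_k$ at stationarity is a mixed Poisson with random intensity $\Theta$; by the tower rule $\mathbb{E}(X_k)=\mathbb{E}(\Theta)$ and by the law of total variance $\sigma^2(X_k)=\mathbb{E}[\sigma^2(X_k\mid\Theta)]+\sigma^2[\mathbb{E}(X_k\mid\Theta)]=\mathbb{E}(\Theta)+\sigma^2(\Theta)$, so $\text{VMR}(X_k)=1+\sigma^2(\Theta)/\mathbb{E}(\Theta)\ge 1$. Negating gives Proposition~\ref{np}. The delicate points, all in Steps 1--2, are: (i) verifying that the coordinates grouped into $\boldsymbol Y$ truly form a closed sub-generator --- this is precisely where acyclicity through $V_k$, the ``no autoregulation'' clause, and the one-step structure are each used (a multi-step transition such as $(G^*,M)\to(G,M+1)$ would couple the births of $X_k$ to the $\boldsymbol Y$-jumps and destroy the conditional Poisson property, which is the mechanism behind the paper's counterexample); (ii) justifying the infinite-past limit, i.e.\ that the shot-noise intensity $\Theta$ does describe the stationary conditional law, which leans on the standing equilibrium assumption; and (iii) the integrability bookkeeping $\mathbb{E}(\Theta)=g_k^{-1}\mathbb{E}_\pi[f_k(\boldsymbol Y)]<\infty$ and $\sigma^2(\Theta)<\infty$, which is equivalent to assuming $X_k$ has finite mean and variance at stationarity. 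A self-contained alternative that sidesteps path measures is to test the stationary master equation against $\phi(n_k)=n_k$ and $\phi(n_k)=n_k(n_k-1)$, obtaining $g_k\mathbb{E}(X_k)=\mathbb{E}[f_k(\boldsymbol Y)]$ and $g_k\,\mathbb{E}[X_k(X_k-1)]=\mathbb{E}[X_k f_k(\boldsymbol Y)]$, so that $\text{VMR}(X_k)\ge 1$ reduces to $\mathrm{Cov}(X_k,f_k(\boldsymbol Y))\ge 0$; but establishing that covariance inequality needs essentially the same monotone-dependence structure, so I expect the mixed-Poisson route to be the cleanest to carry out.
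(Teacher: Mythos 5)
Your proof is correct in outline, but it takes a genuinely different route from the paper's. The paper stays entirely at the level of the stationary master equation: it conditions only on the \emph{current} value $Y=i$ of the driving variables, extracts the first and second moment identities by multiplying Eq.~\ref{eq1} by $n$ and $n^2$ and summing, and then closes the argument with two inequalities --- the quadratic-form identity $\sum_iF_iP_i\mathbb{E}(W_i)-G\sum_iP_i[\mathbb{E}(W_i)]^2=\tfrac{1}{2}\sum_{i,j}P_iq_{ij}[\mathbb{E}(W_i)-\mathbb{E}(W_j)]^2\ge 0$ (which uses stationarity of the marginal of $Y$) and a final Cauchy--Schwarz step. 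Your argument instead conditions on the whole past trajectory of the drivers and identifies the stationary law of $X_k$ as a mixed Poisson with shot-noise intensity $\Theta=\int_{-\infty}^{t}e^{-g_k(t-s)}f_k(\boldsymbol Y(s))\,\mathrm{d}s$, after which overdispersion is immediate from the law of total variance. What your route buys is more information and more transparency: it yields the full stationary distribution (not just two moments), it gives the explicit decomposition $\text{VMR}=1+\sigma^2(\Theta)/\mathbb{E}(\Theta)$ that matches the paper's intrinsic/extrinsic remark, and it makes visible exactly why the one-step structure and the constancy of $g_k$ are needed (a simultaneous jump or a state-dependent death rate destroys the conditional Poisson property). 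What it costs is the measure-theoretic overhead you correctly flag: justifying that $\boldsymbol Y$ is an autonomous sub-chain and that the conditional law given the path is governed by the time-inhomogeneous immigration--death generator, the infinite-past limit, and the integrability of $\Theta$; the paper's algebraic route avoids all of this by never leaving the stationary moment equations (at the price of its own summability assumptions). Your closing observation --- that testing against $n_k$ and $n_k(n_k-1)$ reduces the claim to $\mathrm{Cov}(X_k,f_k(\boldsymbol Y))\ge 0$ --- is in fact essentially the inequality the paper proves via its $\tfrac12\sum_{i,j}P_iq_{ij}[\mathbb{E}(W_i)-\mathbb{E}(W_j)]^2$ identity, so the two approaches meet there.
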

Paulsson et al. study a similar problem \cite{hilfinger2016constraints,yan2019kinetic}, and they state Proposition~\ref{np} in an unpublished work. Proposition~\ref{np} also appears in a preprint by Mahajan et al. \cite{mahajan2021topological}, but the proof is based on a linear noise approximation, which requires that $f_k(\cdot)$ is linear with $\boldsymbol{n}_{\bar{k}}$. We propose a rigorous proof independently.
\begin{proof}
	Denote the expression level of $V_k$ by $W$. Assume the ancestors of $V_k$ are $V_1,\ldots,V_l$. For simplicity, denote the expression levels of $V_1,\ldots,V_l$ by a (high-dimensional) random variable $Y$. Assume $V_k$ has no autoregulation. Since $V_k$ does not regulate $V_1,\ldots,V_l$, $W$ does not affect $Y$. Denote the transition rate from $Y=i$ to $Y=j$ by $q_{ij}\ge 0$. Stipulate that $q_{ii}=-\sum_{j\ne i}q_{ij}$. When $Y=i$, the transition rate from $W=n$ to $W=n+1$ is $F_i$ (does not depend on $n$), and the transition rate from $W=n$ to $W=n-1$ is $G$. 
	
	The master equation of this process is 
	\begin{equation*}
		\begin{split}
			&\frac{\mathrm{d}\mathbb{P}[W(t)=n,Y(t)=i]}{\mathrm{d}t}\\
			=&\mathbb{P}[W(t)=n-1,Y(t)=i]F_i+\mathbb{P}[W(t)=n+1,Y(t)=i]G(n+1)\\
			&+\sum_{j\ne i}\mathbb{P}[W(t)=n,Y(t)=j]q_{ji}-\mathbb{P}[W(t)=n,Y(t)=i](F_i+Gn+\sum_{j\ne i}q_{ij}).
		\end{split}
	\end{equation*}
	Assume there is a unique stationary probability distribution $P_{n,i}=\lim_{t\to \infty}\mathbb{P}[W(t)=n,Y(t)=i]$. This can be guaranteed by assuming the process to be irreducible. Then we have 
	\begin{equation}
		\label{eq1}
		P_{n,i}\Big[F_i+Gn+\sum_{j}q_{ij}\Big]=P_{n-1,i}F_i+P_{n+1,i}G(n+1)+\sum_{j}P_{n,j}q_{ji}.
	\end{equation}
	Define $P_i=\sum_n P_{n,i}$. Sum over $n$ for Eq.~\ref{eq1} to obtain 
	\begin{equation}
		\label{eqz1}
		P_i\sum_{j}q_{ij}=\sum_{j}P_jq_{ji},
	\end{equation}
	meaning that $P_i$ is the stationary probability distribution of $Y$.
	
	Define $W_i$ to be $W$ conditioned on $Y=i$ at stationarity. Then $\mathbb{P}(W_i=n)=\mathbb{P}(W=n\mid Y=i)=P_{n,i}/P_i$, and $\mathbb{E}(W_i)=\sum_n n P_{n,i}/P_i$. Multiply Eq.~\ref{eq1} by $n$ and sum over $n$ to obtain 
	\begin{equation}
		\label{eq2}
		\Big(G+\sum_j q_{ij}\Big)P_i\mathbb{E}(W_i)=F_iP_i+\sum_j q_{ji}P_j\mathbb{E}(W_j).
	\end{equation}
Here and in the following, we repeatedly apply the tricks of splitting $n$ and shifting the index of summation. For example, 
\begin{equation*}
	\begin{split}
		&\sum_{n=1}^{\infty} P_{n-1,i}F_i n-\sum_{n=1}^{\infty} P_{n,i}F_i n\\
		=&\sum_{n=1}^{\infty} P_{n-1,i}F_i (n-1)+\sum_{n=1}^{\infty} P_{n-1,i}F_i-\sum_{n=1}^{\infty} P_{n,i}F_i n\\
		=&\sum_{n-1=0}^{\infty} P_{n-1,i}F_i (n-1)+\sum_{n-1=0}^{\infty} P_{n-1,i}F_i-\sum_{n=0}^{\infty} P_{n,i}F_i n\\
		=&\sum_{n=0}^{\infty} P_{n,i}F_i n+F_i\sum_{n=0}^{\infty} P_{n,i}-\sum_{n=0}^{\infty} P_{n,i}F_i n=F_iP_i.
	\end{split}
\end{equation*}

	Sum over $i$ for Eq.~\ref{eq2} to obtain
	\begin{equation}
		\label{eq3}
		G\sum_iP_i\mathbb{E}(W_i)=\sum_iF_iP_i.
	\end{equation}
	Multiply Eq.~\ref{eq1} by $n^2$ and sum over $n$ to obtain 
	\begin{equation}
		\label{eq4}
		\Big(2G+\sum_j q_{ij}\Big)P_i\mathbb{E}(W_i^2)=F_iP_i+(2F_i+G)P_i\mathbb{E}(W_i)+\sum_j q_{ji}P_j\mathbb{E}(W_j^2).
	\end{equation}
	Sum over $i$ for Eq.~\ref{eq4} to obtain
	\begin{equation}
		\label{eq5}
		2G\sum_iP_i\mathbb{E}(W_i^2)=\sum_iF_iP_i+2\sum_i F_iP_i\mathbb{E}(W_i)+G\sum_iP_i\mathbb{E}(W_i).
	\end{equation}
	Multiply Eq.~\ref{eq2} by $\mathbb{E}(W_i)$ and sum over $i$ to obtain
	\begin{equation}
		\label{eqz2}
		\begin{split}
			&G\sum_iP_i[\mathbb{E}(W_i)]^2+\sum_{i,j}P_iq_{ij}[\mathbb{E}(W_i)]^2\\
			=&\sum_iF_iP_i\mathbb{E}(W_i)+\sum_{i,j}P_jq_{ji}\mathbb{E}(W_i)\mathbb{E}(W_j).
		\end{split}
	\end{equation}
	Then we have
	\begin{equation} 
		\label{eq6}
		\begin{split}
			&\sum_iF_iP_i\mathbb{E}(W_i)-G\sum_iP_i[\mathbb{E}(W_i)]^2 \\
			=&\sum_{i,j}P_iq_{ij}[\mathbb{E}(W_i)]^2-\sum_{i,j}P_jq_{ji}\mathbb{E}(W_i)\mathbb{E}(W_j)\\
			=&\frac{1}{2}\Big\{\sum_{i,j}P_iq_{ij}[\mathbb{E}(W_i)]^2+\sum_i[\mathbb{E}(W_i)]^2\sum_jP_iq_{ij}-2\sum_{i,j}P_iq_{ij}\mathbb{E}(W_i)\mathbb{E}(W_j)\Big\}\\
			=&\frac{1}{2}\Big\{\sum_{i,j}P_iq_{ij}[\mathbb{E}(W_i)]^2+\sum_i[\mathbb{E}(W_i)]^2\sum_jP_jq_{ji}-2\sum_{i,j}P_iq_{ij}\mathbb{E}(W_i)\mathbb{E}(W_j)\Big\}\\
			=&\frac{1}{2}\Big\{\sum_{i,j}P_iq_{ij}[\mathbb{E}(W_i)]^2+\sum_{i,j}P_iq_{ij}[\mathbb{E}(W_j)]^2-2\sum_{i,j}P_iq_{ij}\mathbb{E}(W_i)\mathbb{E}(W_j)\Big\}\\
			=&\frac{1}{2}\sum_{i,j}P_iq_{ij}[\mathbb{E}(W_i)-\mathbb{E}(W_j)]^2\ge 0.
		\end{split}
	\end{equation}
	Here the first equality is from Eq.~\ref{eqz2}, the third equality is from Eq.~\ref{eqz1}, and other equalities are equivalent transformations.
	
	Now we have  
	\begin{equation}
		\begin{split}
			&\mathbb{E}(W^2)-\mathbb{E}(W)-[\mathbb{E}(W)]^2\\
			=&\sum_iP_i\mathbb{E}(W_i^2)-\sum_iP_i\mathbb{E}(W_i)-\Big[\sum_iP_i\mathbb{E}(W_i)\Big]^2\\
			=&\frac{1}{G}\sum_i F_iP_i\mathbb{E}(W_i)+\sum_i P_i\mathbb{E}(W_i)-\sum_iP_i\mathbb{E}(W_i)-\Big[\sum_iP_i\mathbb{E}(W_i)\Big]^2\\
			\ge &\sum_iP_i[\mathbb{E}(W_i)]^2-\Big[\sum_iP_i\mathbb{E}(W_i)\Big]^2\\
			=&\Big(\sum_iP_i\Big)\sum_iP_i[\mathbb{E}(W_i)]^2-\Big[\sum_iP_i\mathbb{E}(W_i)\Big]^2\ge 0,\\
		\end{split}
	\label{fe}
	\end{equation}
	where the first equality is by definition, the second equality is from Eqs.~\ref{eq3},\ref{eq5}, the first inequality is from Eq.~\ref{eq6}, the third equality is from $\sum_i P_i=1$, and the second inequality is the Cauchy-Schwarz inequality.
	
	Since $\mathbb{E}(W^2)-[\mathbb{E}(W)]^2\ge \mathbb{E}(W)$, $\text{VMR}(W)=\{\mathbb{E}(W^2)-[\mathbb{E}(W)]^2\}/ \mathbb{E}(W)\ge 1$.
\end{proof}

\begin{remark}
	In gene expression, the total noise ($\sigma^2(X)/(\mathbb{E}X)^2$) can be decomposed into intrinsic (cellular) noise and extrinsic (environmental) noise \cite{baudrimont2019contribution,thomas2019intrinsic,ham2020extrinsic,lin2021disentangling,wang2019roles}. Inspired by that, we can decompose the VMR into intrinsic and extrinsic components. Denote intrinsic and extrinsic stochastic factors as $I,E$, and the expression level $X$ is a deterministic function of these factors: $X=X(I,E)$. Then 
	\[\text{VMR}_{\text{int}}=\frac{\mathbb{E}_E(\mathbb{E}_{I\mid E}X^2)-
		\mathbb{E}_E(\mathbb{E}_{I\mid E}X)^2}{\mathbb{E}X},\]
	\[\text{VMR}_{\text{ext}}=\frac{\mathbb{E}_E(\mathbb{E}_{I\mid E}X)^2-[\mathbb{E}_E(\mathbb{E}_{I\mid E}X)]^2}{\mathbb{E}X},\]
	where $\mathbb{E}_{I\mid E}$ is the expectation conditioned on $E$. This decomposition might lead to further understanding of Proposition~\ref{np}.
\end{remark}

We hypothesize that the requirement for $g_k(\cdot)$ in Proposition~\ref{np} can be dropped:
\begin{conjecture}
	\label{conj2}
	Assume $V_k$ is not contained in a directed cycle in the GRN, and $V_k$ does not regulate its interior factors (gene state and/or mRNA count). If $V_k$ has no autoregulation, meaning that $h_k(\cdot)$ does not depend on $n_k$ (but might depend on $\boldsymbol{n}_{\bar{k}}$), then $V_k$ has $\text{VMR}\ge 1$. 
\end{conjecture}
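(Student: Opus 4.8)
The plan is to follow the proof of Proposition~\ref{np} as far as possible, but to replace its final ``sum over the environment states'' step — which there isolates $\text{VMR}(W)-1$ precisely because $g_k$ is constant — by a per‑state positivity argument, which is what the now $\boldsymbol{n}_{\bar k}$‑dependent $g_k$ forces. First I would reduce to the two‑variable picture: write $W=X_k$, and let $Y$ collect all ancestors of $V_k$ together with $V_k$'s interior factors (which $V_k$ does not regulate). Since no directed cycle contains $V_k$, no arrow points from $V_k$ into $Y$, so $(W,Y)$ is an autonomous one‑step Markov chain with $Y$ evolving autonomously. Assuming irreducibility, denote the $Y$‑generator by $q_{ij}$ (with $q_{ii}=-\sum_{j\ne i}q_{ij}$) and its stationary law by $P_i>0$; when $Y=i$, let the birth rate of $W$ be $F_i$ and the per‑molecule death rate be $G_i>0$ — neither depends on the value of $W$, which is what ``no autoregulation'' means (I read it as $f_k,g_k$ depending on $\boldsymbol n_{\bar k}$ only, so that the death rate stays linear in $n_k$; this is the case relevant to the model). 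Put $P_{n,i}$ for the joint stationary law, $e_i=\mathbb{E}(W\mid Y=i)$, $s_i=\mathbb{E}(W^2\mid Y=i)$, and $u_i=s_i-e_i^2-e_i=\mathrm{Var}(W\mid Y=i)-\mathbb{E}(W\mid Y=i)$.

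Next, exactly as in the proof of Proposition~\ref{np} (multiply the stationary master equation by $1,n,n^2$ and sum over $n$, using the index‑shift trick, and invoking the paper's standing finiteness assumptions to justify the interchanges), one gets for every $i$
\[G_iP_ie_i=F_iP_i+\sum_j q_{ji}P_je_j,\qquad 2G_iP_is_i=F_iP_i(2e_i+1)+G_iP_ie_i+\sum_j q_{ji}P_js_j,\]
together with $\sum_j q_{ji}P_j=0$. Eliminating $F_iP_i$ between the two identities and using $\sum_j q_{ji}P_j=0$ to center every sum around $i$, a short simplification (the cross terms collapse to a perfect square, exactly as in Eq.~\ref{eq6}) produces the key per‑state identity
\[\Big(2G_iP_i+\sum_{j\ne i}q_{ji}P_j\Big)u_i-\sum_{j\ne i}q_{ji}P_ju_j=\sum_{j\ne i}q_{ji}P_j(e_j-e_i)^2\ \ge\ 0.\]
Read this as a linear system $Au=R$ with $R\ge 0$ entrywise, $A_{ii}=2G_iP_i+\sum_{j\ne i}q_{ji}P_j>0$, and $A_{ij}=-q_{ji}P_j\le 0$ for $j\ne i$. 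Each row of $A$ sums to $2G_iP_i>0$, so $A$ is strictly row‑diagonally dominant with nonpositive off‑diagonal entries, hence a nonsingular M‑matrix with $A^{-1}\ge 0$; therefore $u=A^{-1}R\ge 0$, i.e. $\mathrm{Var}(W\mid Y=i)\ge\mathbb{E}(W\mid Y=i)$ for every $i$. Then the law of total variance gives
\[\mathbb{E}(W^2)-\mathbb{E}(W)-[\mathbb{E}(W)]^2=\sum_iP_iu_i+\Big(\sum_iP_ie_i^2-\big(\textstyle\sum_iP_ie_i\big)^2\Big)\ \ge\ 0,\]
the second bracket being $\ge 0$ by Cauchy--Schwarz, so $\text{VMR}(W)\ge 1$.

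The main obstacle is the positivity step $A^{-1}\ge 0$ when $Y$ has infinitely many states, which is unavoidable once mRNA counts are folded into $Y$: ``strictly diagonally dominant $\Rightarrow A^{-1}\ge 0$'' then requires a finite‑truncation‑and‑limit argument (or a probabilistic reading of $A^{-1}$ as a Green's operator), and one must also control boundary/summability issues in the moment identities. I expect this to be the hard part. A cleaner route that bypasses it entirely — and that directly yields $\mathrm{Var}(W\mid Y=i)\ge\mathbb{E}(W\mid Y=i)$ — is to condition on the whole environment trajectory $\{Y(t)\}_{t\in\mathbb{R}}$: given this trajectory, the births of $W$ form an inhomogeneous Poisson process of intensity $F_{Y(t)}$ and each molecule is deleted independently, so by Poisson thinning $W$ is, at stationarity, Poisson with the trajectory‑dependent mean $m=\int_{-\infty}^{0}F_{Y(u)}\exp\!\big(-\!\int_{u}^{0}G_{Y(r)}\,\mathrm{d}r\big)\,\mathrm{d}u$. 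Then $\mathrm{Var}(W\mid\text{trajectory})=m=\mathbb{E}(W\mid\text{trajectory})$, and the law of total variance over the trajectory, refined by the event $\{Y(0)=i\}$, gives $\mathrm{Var}(W\mid Y=i)\ge\mathbb{E}(W\mid Y=i)$ and hence $\text{VMR}(W)\ge 1$; here the only thing left to check is a.s.\ finiteness of $m$, which follows from the standing assumption that the stationary mean is finite (or from $\inf_iG_i>0$ with bounded $F_i$). I would present the moment/M‑matrix argument as the primary, self‑contained proof and keep the trajectory‑Poisson argument as the conceptual core and as the robust fallback for infinite environment spaces.
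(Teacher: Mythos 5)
First, note that the paper does not prove this statement: Conjecture~\ref{conj2} is left open, and the authors explicitly record the obstruction, namely that in Eq.~\ref{fe} the factor $G_i$ can no longer be pulled out of the sum over environment states once the degradation rate depends on $\boldsymbol{n}_{\bar k}$. So there is no ``paper's proof'' to compare against; what you have written is an attack on the open problem, and it is substantively correct. Your key move answers the stated obstruction exactly: instead of summing the moment identities over $i$ (which is where the $G_i$'s get stuck), you eliminate $F_iP_i$ between the first- and second-moment identities to obtain the per-state relation $\bigl(2G_iP_i+\sum_{j\ne i}q_{ji}P_j\bigr)u_i-\sum_{j\ne i}q_{ji}P_ju_j=\sum_{j\ne i}q_{ji}P_j(e_j-e_i)^2\ge 0$ with $u_i=\mathrm{Var}(W_i)-\mathbb{E}(W_i)$. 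I have checked this identity and it holds; the row sums of your matrix $A$ are $2G_iP_i>0$, so the M-matrix conclusion $u\ge 0$ is valid whenever the environment chain $Y$ has finitely many states, and the law-of-total-variance finish is immediate. The Cox-process argument you relegate to a fallback is in fact the cleanest complete route: conditioned on the environment trajectory, molecules are born by an inhomogeneous Poisson process and die independently (precisely because neither $F$ nor $G$ depends on $n_k$), so $W$ is conditionally Poisson and any mixture of Poissons is overdispersed. This rigorizes the paper's own heuristic about averaging Poisson variables and needs only a.s.\ finiteness of the conditional mean, which follows from $\mathbb{E}(W)<\infty$.

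Two caveats you should make explicit. (1) Both of your arguments assume $f_k$ and $g_k$ are \emph{separately} independent of $n_k$, whereas the conjecture as literally written only requires the ratio $h_k=f_k/g_k$ to be independent of $n_k$; your reading is the one the authors clearly intend (it is the only one in which the death rate stays linear in $n_k$ and molecules die independently), but it narrows the statement, and the fully general version remains open. (2) For the primary argument, the infinite environment state space is a genuine gap, as you acknowledge; it can be closed either by a maximum-principle argument (your identity gives $u_i\ge(\inf_j u_j)\,Q_i/(2G_i+Q_i)$ with $Q_i=\sum_{j\ne i}q_{ij}$, which forces $\inf_j u_j\ge0$ under uniform bounds $\inf_iG_i>0$ and $\sup_iQ_i<\infty$, provided $\sup_ie_i<\infty$ so that $u$ is bounded below) or by simply promoting the trajectory argument to the main proof. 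Finally, multiplying the master equation by $n^2$ and summing tacitly cancels third-moment terms, so you need $\mathbb{E}(W_i^3)<\infty$ or a truncation; this gap is shared with the paper's own proof of Proposition~\ref{np} and is minor, but worth a sentence.
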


The main obstacle of proving this conjecture is that the second equality in Eq.~\ref{fe} does not hold. The reason is that $G_i$ cannot be extracted from the summation, and we cannot link $\sum_iP_i\mathbb{E}(W_i^2)$ and $\sum_i G_iP_i\mathbb{E}(W_i^2)$.

If the GRN has directed cycles, there is a result by Paulsson et al. \cite{hilfinger2016constraints,yan2019kinetic}, which is proved under first-order approximations of covariances. The general case (when the approximations do not apply) has been numerically verified but not proved yet:
\begin{conjecture}
	Assume for each $V_i$, $g_i(\cdot)$ does not depend on $\boldsymbol{n}$, and $f_i(\cdot)$ does not depend on $n_i$ (no autoregulation). Then for at least one gene $V_j$, we have $\text{VMR}\ge 1$ \cite{hilfinger2016constraints,yan2019kinetic}.
	\label{conj3}
\end{conjecture}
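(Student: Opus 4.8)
The plan is to recast the statement in terms of stationary covariances and then argue by contradiction, following the style of the proof of Proposition~\ref{np}. First, multiply the master equation Eq.~\ref{me} by $n_i$, by $n_i^2$, and by $n_in_j$, sum over $\boldsymbol{n}$, and set the left-hand sides to zero at stationarity; the index-splitting and index-shifting manipulations used in the proof of Proposition~\ref{np} go through verbatim for each coordinate. Using that $g_i$ is a constant and that $f_i$ does not depend on $n_i$, and writing $\mu_i=\mathbb{E}(X_i)$, this produces the exact identities
\[\mathbb{E}\big(f_i(\boldsymbol{X}_{\bar{i}})\big)=g_i\mu_i,\]
\[g_i\,\sigma^2(X_i)=\mathrm{Cov}\big(f_i(\boldsymbol{X}_{\bar{i}}),X_i\big)+g_i\mu_i,\]
\[(g_i+g_j)\,\mathrm{Cov}(X_i,X_j)=\mathrm{Cov}\big(f_i(\boldsymbol{X}_{\bar{i}}),X_j\big)+\mathrm{Cov}\big(f_j(\boldsymbol{X}_{\bar{j}}),X_i\big)\qquad(i\neq j).\]
The middle identity shows that $\text{VMR}(X_i)\ge 1$ if and only if $c_i:=\mathrm{Cov}\big(f_i(\boldsymbol{X}_{\bar{i}}),X_i\big)\ge 0$, so Conjecture~\ref{conj3} is equivalent to the statement that the numbers $c_1,\dots,c_m$ cannot all be negative.

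Assume for contradiction that $c_i<0$ for every $i$. The idea is to sum the pairwise identities against a positive weight vector $\boldsymbol{w}$ and combine the result with the positive-semidefiniteness of the covariance matrix $C=\big(\mathrm{Cov}(X_i,X_j)\big)$ --- equivalently, with $\sigma^2\big(\sum_i a_iX_i\big)\ge 0$ together with the Cauchy--Schwarz inequality applied to $\sum_i a_iX_i$ and $\sum_i b_if_i(\boldsymbol{X}_{\bar{i}})$ for suitably chosen $\boldsymbol{a},\boldsymbol{b}$ --- with the aim that the cross terms cancel and leave $\sum_i w_ic_i\ge 0$. The choice of $\boldsymbol{w}$ is the subtle point: the simplest target $\sum_i\text{VMR}(X_i)\ge m$ already fails (two mutually regulating genes with widely separated degradation rates can satisfy $\sum_i\text{VMR}(X_i)<m$ even at the level of the linear-noise approximation), whereas the degradation-weighted target $\sum_i g_i\,\text{VMR}(X_i)\ge\sum_i g_i$, i.e.\ $w_i\propto g_i/\mu_i$, holds in every example we have checked. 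In the linearized regime --- replace $f_i$ by $\mathbb{E}(f_i)+\sum_{k\neq i}H_{ik}(X_k-\mu_k)$ with $H$ of zero diagonal, so that $C$ solves the Lyapunov equation $JC+CJ^{\top}=-2\,\mathrm{diag}(g_i\mu_i)$ with $J=H-\mathrm{diag}(g_i)$ --- this target is exactly the trace inequality $\mathrm{tr}\big((M^{-1}H+H^{\top}M^{-1})C\big)\ge 0$ with $M=\mathrm{diag}(\mu_i)$, which is the first-order (linear-noise) result of Paulsson et al.\ recalled just above Conjecture~\ref{conj3}; reproducing its spectral/eigenvector proof rigorously settles the conjecture in that regime.

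The hard part --- and the reason this remains a conjecture --- is removing the linearization. For general $f_i$, $\mathrm{Cov}\big(f_i(\boldsymbol{X}_{\bar{i}}),X_j\big)$ is not a linear functional of the entries of $C$, so the three identities no longer close, which is precisely the obstruction flagged after Conjecture~\ref{conj2} (the breakdown of the second equality in Eq.~\ref{fe}, since ``$G_i$ cannot be extracted from the summation''). We would try to get around this by conditioning: write $c_i=\mathrm{Cov}\big(f_i(\boldsymbol{X}_{\bar{i}}),\,\mathbb{E}[X_i\mid\boldsymbol{X}_{\bar{i}}]\big)$ and control the conditional mean $\psi_i:=\mathbb{E}[X_i\mid\boldsymbol{X}_{\bar{i}}]$ through the conditional birth--death balance for $X_i$, in the spirit of the treatment of $\mathbb{E}(W_i)$ in the proof of Proposition~\ref{np}; an alternative is to split each $f_i$ into monotone pieces, dispose of the definite-sign sub-networks by a monotone-coupling/FKG argument, and recombine. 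The genuine difficulty is expected to lie here: $\psi_i$ is determined by the full coupled dynamics and is in general neither monotone in $\boldsymbol{X}_{\bar{i}}$ nor close to $f_i/g_i$, so passing from the linear-regime trace inequality to the nonlinear statement probably requires a new global ingredient --- a Lyapunov functional for Eq.~\ref{me}, or an entropy-production-type identity --- rather than moment bookkeeping alone.
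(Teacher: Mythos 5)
The first thing to say is that the paper offers no proof of this statement: it is presented explicitly as a conjecture, attributed to Paulsson et al.\ and proved by them only ``under first-order approximations of covariances,'' with the general case described as ``numerically verified but not proved yet.'' So there is no argument of the paper's to compare yours against, and the honest conclusion of your proposal --- that the nonlinear case requires an ingredient you do not have --- coincides with the paper's own position.

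Within that framing, your partial work is sound. The three stationary moment identities are correct consequences of Eq.~\ref{me} under the stated hypotheses (constant $g_i$, $f_i$ independent of $n_i$), and the reduction of the conjecture to ``the quantities $c_i=\mathrm{Cov}\big(f_i(\boldsymbol{X}_{\bar i}),X_i\big)$ cannot all be negative'' is exact and is the right normal form for the problem. But the proposal is not a proof, and the gap sits exactly where you place it. First, even the linear-noise-regime claim is asserted rather than carried out: you report that the degradation-weighted inequality $\sum_i g_i\,\text{VMR}(X_i)\ge\sum_i g_i$ ``holds in every example we have checked'' and that reproducing the spectral proof of Paulsson et al.\ would settle that regime, which is a citation of a result, not an argument; it is not obvious that the trace inequality $\mathrm{tr}\big((M^{-1}H+H^{\top}M^{-1})C\big)\ge 0$ follows from the Lyapunov equation for an arbitrary stable $J=H-\mathrm{diag}(g_i)$ without additional structure, so even the weighted target needs a proof before it can be used. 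Second, for general $f_i$ the moment identities do not close (as you note, echoing the paper's remark after Conjecture~\ref{conj2}), and neither proposed repair is developed far enough to judge: the conditional mean $\psi_i=\mathbb{E}[X_i\mid\boldsymbol{X}_{\bar i}]$ satisfies no closed birth--death balance once feedback cycles are present, and the monotone-splitting/FKG idea is only named. None of this is a criticism of the strategy --- it is a reasonable research plan, and Example~\ref{ex2} shows the hypotheses cannot be weakened --- but as a proof attempt it establishes only the reformulation in terms of the $c_i$, not the conjecture itself.
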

Due to the existence of directed cycles, one gene can affect itself through other genes, and we cannot study them separately.

Notice that Conjecture~\ref{conj3} does not hold if $g_i$ depends on $\boldsymbol{n}_{\bar{i}}$:
\begin{example}
	Consider a one-step Markov chain that satisfies Eq.~\ref{me}, where $m=2$, $f_1(n_2)=g_1(n_2)=1$ for $n_2=2$, $f_1(n_2)=g_1(n_2)=0$ for $n_2\ne 2$, and $f_2(n_1)=g_2(n_1)=1$ for $n_1=2$, $f_2(n_1)=g_2(n_1)=0$ for $n_1\ne 2$. The initial state is $(n_1=2,n_2=2)$. Then $\text{VMR}=2e/(4e-1)\approx 0.55$ for both genes (see Appendix~\ref{a0.2} for details).
	\label{ex2}
\end{example}

Assume Conjecture~\ref{conj3} is correct. For $m$ genes, if we find that VMR for each gene is less than $1$, then we can infer that autoregulation exists, although we do not know which gene has autoregulation. Another possibility is that the one-step model fails.

\section{Applying theoretical results to experimental data}
\label{app}

\begin{algorithm}[!htbp]
	\caption{Detailed workflow of inferring autoregulation with gene expression data.}
	\label{alg}
	\vspace{-\bigskipamount}
	%\vspace*{-12pt}
	\ \\
	\begin{enumerate}
		\item \textbf{Input} 
		
		\quad Single-cell non-interventional one-time expression data for genes $V_1,\ldots,V_m$
		
		\quad The structure of the GRN that contains $V_1,\ldots,V_m$
		\item \textbf{Calculate} the VMR of each $V_k$\
		
		\item \textbf{If} $V_k$ is not in a directed cycle (like PIP2 in Fig.~\ref{grn}) and $\text{VMR}<1$
		
		\quad   \textbf{Output} $V_k$ has autoregulation
		
		\quad   // Assume the degradation of $V_k$ is not regulated by $V_1,\ldots,V_m$
		
		\textbf{Else} 
		
		\quad   \textbf{If} $V_k$ has no ancestor in the GRN (like PIP3 in Fig.~\ref{grn}) and $\text{VMR}>1$
		
		\quad\quad   \textbf{Output} $V_k$ has autoregulation
		
		\quad\quad //Assume the expression of $V_k$ is autonomous
		
		\quad   \textbf{Else}
		
		\quad\quad   \textbf{Output} We cannot determine whether $V_k$ has autoregulation
		
		\quad   \textbf{End} of if
		
		\textbf{End} of if

	\end{enumerate}
\end{algorithm}

We summarize our theoretical results into Algorithm~\ref{alg}. Proposition~\ref{prop2} applies to a gene that has no ancestor in the GRN. However, it requires the corresponding gene has autonomous expression (or the transition rates of gene states are high enough, so that the non-autonomous process is close to an autonomous process), which is difficult to validate and often does not hold in reality. Thus the inference result by Proposition~\ref{prop2} for $\text{VMR}>1$ (positive autoregulation) is not very reliable. When $\text{VMR}<1$ and Proposition~\ref{prop2} could apply, we should instead apply Proposition~\ref{np} to determine the existence of autoregulation, since Proposition~\ref{np} does not require the expression to be autonomous, thus being much more reliable, although it may fail if the one-step model does not hold. Proposition~\ref{np} applies when the gene is not in a feedback loop and has $\text{VMR}<1$. Notice that our result cannot determine that a gene has no autoregulation. 

For a given gene without autoregulation, its expression level satisfies a Poisson distribution, and VMR is $1$. If we have $n$ samples of its expression level, then the sample VMR (sample variance divided by sample mean) asymptotically satisfies a Gamma distribution $\Gamma[(n-1)/2,2/(n-1)]$, and we can determine the confidence interval of sample VMR \cite{eden2010drawing}. If the sample VMR is out of this confidence interval, then we know that VMR is significantly different from $1$, and Propositions~\ref{prop2},\ref{np} might apply. 

We apply our method to four groups of single-cell non-interventional one-time gene expression data from experiments, where the corresponding GRNs are known. Notice that we need to convert indirect measurements into protein/mRNA count. See Table~\ref{tab} for our inference results and theoretical/experimental evidence that partially validates our results. See Appendix~\ref{app1} for details. There are 186 genes in these four data sets, and we can only determine that 12 genes have autoregulation (7 genes determined by Proposition~\ref{prop2}, and 5 genes determined by Proposition~\ref{np}). Not every VMR is less than $1$, so that Conjecture~\ref{conj3} does not apply. For the other 174 genes, (1) some of them are not contained in the known GRN, and we cannot determine if they are in directed cycles; (2) some of them are in directed cycles; (3) some of them have ancestors, and we cannot reject the hypothesis that $\text{VMR}\ge 1$; (4) some of them have no ancestors, and we cannot reject the hypothesis that $\text{VMR}= 1$. Therefore, Proposition~\ref{prop2} and Proposition~\ref{np} do not apply, and we do not know whether they have autoregulation. 

In some cases, we have experimental evidence that some genes have autoregulation, so that we can partially validate our inference results. Nevertheless, as discussed in the Introduction, there is no gold standard to evaluate our inference results. Besides, Proposition~\ref{np} requires that the one-step model holds, which we cannot verify.

In the data set by Guo et al. \cite{guo2010resolution}, Sanchez-Castillo et al. \cite{sanchez2018bayesian} inferred that 17 of 39 genes have autoregulation, and 22 genes do not have autoregulation. We infer that 5 genes have autoregulation, and 34 genes cannot be determined. Here 3 genes are shared by both inference results to have autoregulation. Consider a random classifier that randomly picks 5 genes and claims they have autoregulation. Using Sanchez-Castillo et al. as the standard, this random classifier has probability $62.55\%$ to be worse than our result, and $10.17\%$ to be better than our result. Thus our inference result is better than a random classifier, but the advantage is not substantial.

\begin{table}[]	
	\begin{tabular}{lllll}
		Source          &\begin{tabular}[c]{@{}l@{}}Propo-\\ sition~\ref{prop2}\end{tabular}                                                     & \begin{tabular}[c]{@{}l@{}}Propo-\\ sition~\ref{np}\end{tabular}                                                              & Theory                                                                 & Experiment                                                     \\
		\hline
		\begin{tabular}[c]{@{}l@{}}Guo\\ et al. \cite{guo2010resolution} \end{tabular}
		& \begin{tabular}[c]{@{}l@{}}FN1\\ \textbf{HNF4A}\end{tabular}               & \begin{tabular}[c]{@{}l@{}} \textbf{TCFAP2C}\\ \textbf{BMP4}\\ CREB312\end{tabular} & \begin{tabular}[c]{@{}l@{}}BMP4 \cite{sanchez2018bayesian}\\ HNF4A \cite{sanchez2018bayesian}\\  TCFAP2C \cite{sanchez2018bayesian}\end{tabular} & \begin{tabular}[c]{@{}l@{}}BMP4 \cite{pramono2016thrombopoietin}\\ HNF4A \cite{chahar2014chromatin}\\ TCFAP2C \cite{kidder2010examination}\end{tabular} \\
		\hline
		\begin{tabular}[c]{@{}l@{}}Psaila\\ et al. \cite{psaila2016single}\end{tabular}
		& \begin{tabular}[c]{@{}l@{}}BIM\\ CCND1\\ \textbf{ECT2}\\ PFKP\end{tabular} &                                                                     &                                                                        & ECT2 \cite{hara2006cytokinesis}                                                           \\
		\hline
		\begin{tabular}[c]{@{}l@{}}Moignard\\ et al. \cite{moignard2015decoding}\end{tabular}
		&                                                                   & \begin{tabular}[c]{@{}l@{}}EIF2B1\\ HOXD8\end{tabular}          &                                                                        &                                                                \\
		\hline
		\begin{tabular}[c]{@{}l@{}}Sachs\\ et al. \cite{sachs2005causal}\end{tabular}
		& PIP3                                                              &                                                                          &                                                                        &                                                               
	\end{tabular}
	\caption{The autoregulation inference results by our method on four data sets. Source column is the paper that contains this data set. Proposition~\ref{prop2} column is the genes that can be only inferred by Proposition~\ref{prop2} to have autoregulation. Proposition~\ref{np} column is the genes that can be inferred by Proposition~\ref{np} to have autoregulation. Theory column is the genes inferred by both our method and other theoretical works to have autoregulation. Experiment column is the genes inferred by both our method and other experimental works to have autoregulation. \textbf{Bold} font means the inferred gene with autoregulation is validated by other results. Details can be found in Appendix~\ref{app1}.}
	\label{tab}
\end{table}

%Guo et al. \cite{guo2010resolution} and Psaila et al. \cite{psaila2016single} measured CT values. We first abandon data points that exceed the baseline value (measurement failed). For remaining data points $c_1,\ldots,c_k$, if the maximum is $\hat{c}$, then each data point $c_i$ is transformed into $2^{\hat{c}-c_i}$.

%Moignard et al. \cite{moignard2015decoding} also measured CT values, but such values were adjusted. We first abandon data points that exceed the baseline value (measurement failed). For remaining data points $c_1,\ldots,c_k$, if the minimum is $\hat{c}$, then each data point $c_i$ is transformed into $2^{c_i-\hat{c}}$.

\section{Conclusions}
\label{con}
For a single gene that is not affected by other genes, or a group of genes that form a connected GRN, we develop rigorous theoretical results (without applying approximations) to determine the existence of autoregulation. These results generalize known relationships between autoregulation and VMR by dropping restrictions on parameters. Our results only depend on VMR, which is easy to compute and more robust than other complicated statistics. We also apply our method to experimental data and detect some genes that might have autoregulation. 

Our method requires \textbf{independent} and \textbf{identically} distributed samples from the \textbf{exact} \textbf{stationary} distribution of a \textbf{fully observed} Markov chain, plus a known \textbf{GRN}. Proposition~\ref{prop2} requires that the expression is \textbf{autonomous}. Proposition~\ref{np} requires that the Markov chain model is  \textbf{one-step}, the GRN has \textbf{no} directed \textbf{cycle}, and \textbf{degradation} is \textbf{not regulated}. If our inference fails, then some requirements are not met: (1) cells might affect each other, making the samples dependent; (2) cells are heterogeneous; (3) the measurements have extra errors; (4) the cells are not at stationarity; (5) there exist unobserved variables that affect gene expression; (6) the GRN is inferred by a theoretical method, which can be interfered by the existence of autoregulation; (7) the expression is non-autonomous; (8) the Markov chain is multi-step; (9) the GRN has unknown directed cycles; (10) the degradation rate is regulated by other genes. Such situations, especially the unobserved variables, are unavoidable. Therefore, current data might not satisfy these requirements, and our inference results should be interpreted as informative findings, not ground truths. 

There are some known methods that overcome the above obstacles, and there are also some possible solutions that might appear in the future. (1) The dependency can be solved by better measurements for isolated cells that do not affect each other. In fact, the relationship between autoregulation and cell-cell interaction has been studied \cite{levenberg1998long}. (2) About cell heterogeneity, we prove a result in Appendix~\ref{app2} that if several cell types have $\text{VMR}\ge 1$, then for a mixed population of such cell types, we still have $\text{VMR}\ge1$. Therefore, cell heterogeneity does not fail Proposition~\ref{np}, since $\text{VMR}<1$ for the mixture of several cell types means $\text{VMR}<1$ for at least one cell type. (3) With the development of experimental technologies, we expect that the measurement error can decrease. (4) Some works study autoregulation in non-stationary situations \cite{cao2020analytical,swain2002intrinsic,skinner2016single,jia2021frequency}. (5) Since hidden variables hurt any mechanism-based models, we can develop methods (especially with machine learning tools) that determine autoregulation based on similarities between gene expression profiles \cite{wang2021inference,yang2020machine,wang2022two,wang2021measuring,wang2023online}. (6) Some GRN inference methods can also determine the existence of autoregulation \cite{sanchez2018bayesian}. (7) Many methods (including our Proposition~\ref{np}) work in non-autonomous situations. (8) Some works study multi-step models \cite{braichenko2021distinguishing,karmakar2021effect,voliotis2008fluctuations}. (9) We expect the appearance of more advanced GRN inference methods. (10) If probabilists can prove Conjecture~\ref{conj2}, then the restriction on degradation rate can be lifted.

In fact, other theoretical works that determine gene autoregulation, or general gene regulation, also need various assumptions and might fail. Nevertheless, with the development of experimental technologies and theoretical results, we believe that some obstacles will be lifted, and our method will be more applicable in the future. Besides, our method can be further developed and combined with other methods.

\section*{Acknowledgments}

Y.W. would like to thank Jiawei Yan for fruitful discussions, and Xiangting Li, Zikun Wang, Mingtao Xia for helpful comments. The authors would like to thank some anonymous reviewers for their wise suggestions.

\section*{Declarations}

\begin{itemize}
\item Funding: This research was partially supported by NIH grant R01HL146552 (Y.W.). 
\item Competing interests: The authors declare no conflict of interest.

\item Data and code availability: All code files are available in

https://github.com/YueWangMathbio/Autoregulation.

\end{itemize}

\appendix
	
\section{Simulation results}
\label{sim}
\subsection{Test Proposition~\ref{prop2} without autoregulation}
\textbf{Simulation 1}: Consider the Markov chain in Example~\ref{ex1} with $k=1,b=0,c=1$. The stationary distribution is Poissonian with parameter $1$. This process has no autoregulation with the true $\text{VMR}=1$. For sample sizes $n=100$, $n=1000$, $n=10000$, we repeat the experiment for $10000$ times and calculate the rate that the sample VMR falls in the $95\%$ confidence interval. See Table~\ref{ts1} for results. Proposition~\ref{prop2} has about $95\%$ probability to produce the correct result that $\text{VMR}=1$ (no autoregulation), since the confidence interval is $95\%$.

\begin{table}[]
	\begin{tabular}{l|lll}
		& $\text{VMR}<1$ & $\text{VMR}=1$ (true) & $\text{VMR}>1$ \\ \hline
		$n=100$   & $2.5\%$               & $95.3\%$               & $2.2\%$                  \\
		$n=1000$  & $2.3\%$               & $95.2\%$              & $2.5\%$                  \\
		$n=10000$ & $2.5\%$               & $95.1\%$               & $2.4\%$                 
	\end{tabular}
\caption{Success rates for determining VMR in Simulation 1. For different sample sizes $n$, calculate the rate that VMR is different from $1$.}
\label{ts1}
\end{table}

\subsection{Test Proposition~\ref{prop2} with autoregulation}
\textbf{Simulation 2}: Consider the Markov chain in Example~\ref{ex1} with $k=1,b=1,c=2$. The stationary distribution is geometric with parameter $0.5$. This process has positive autoregulation with the true $\text{VMR}=2$. For sample sizes $n=100$, $n=1000$, $n=10000$, we repeat the experiment for $10000$ times and calculate the rate that the sample VMR falls in the $95\%$ confidence interval. See Table~\ref{ts2} for results. When $n$ is not too small, Proposition~\ref{prop2} always produces the correct result that $\text{VMR}>1$ (positive autoregulation).

\begin{table}[]
	\begin{tabular}{l|lll}
		& $\text{VMR}<1$ & $\text{VMR}=1$ & $\text{VMR}>1$ (true) \\ \hline
		$n=100$   & $0\%$               & $1.9\%$               & $98.1\%$                  \\
		$n=1000$  & $0\%$               & $0\%$              & $100\%$                  \\
		$n=10000$ & $0\%$               & $0\%$               & $100\%$                 
	\end{tabular}
	\caption{Success rates for determining VMR in Simulation 2. For different sample sizes $n$, calculate the rate that VMR is different from $1$.}
	\label{ts2}
\end{table}

\subsection{Test Proposition~\ref{np} without autoregulation}
\textbf{Simulation 3}: Consider a Markov chain $(G,M)$ that satisfies Eq.~\ref{me}. $G$ can take $0$ and $1$, and $M$ can take values in $\mathbb{Z}$. $G$ does not depend on $M$, and transition rates are both $10^{-10}$ for $G=0 \to G=1$ and $G=1\to G=0$. Restricted on $G=0$, $M$ is the same as Example~\ref{ex1} with $k=1,b=0,c=1$. Restricted on $G=1$, $M$ is the same as Example~\ref{ex1} with $k=2,b=0,c=1$. The stationary distribution is the average of two Poisson distributions with parameters $1$ and $2$. This process has no autoregulation with the true $\text{VMR}=1.167$. For sample sizes $n=100$, $n=1000$, $n=10000$, we repeat the experiment for $10000$ times and calculate the rate that the sample VMR falls in the $95\%$ confidence interval. See Table~\ref{ts3} for results. When $n$ increases, we are very likely to obtain the correct $\text{VMR}>1$, but Proposition~\ref{np} cannot determine whether autoregulation exists.

\begin{table}[]
	\begin{tabular}{l|lll}
		& $\text{VMR}<1$ & $\text{VMR}=1$ & $\text{VMR}>1$ (true) \\ \hline
		$n=100$   & $0.2\%$               & $81.6\%$               & $18.2\%$                  \\
		$n=1000$  & $0\%$               & $7.7\%$              & $92.3\%$                  \\
		$n=10000$ & $0\%$               & $0\%$               & $100\%$                 
	\end{tabular}
	\caption{Success rates for determining VMR in Simulation 3. For different sample sizes $n$, calculate the rate that VMR is different from $1$.}
	\label{ts3}
\end{table}

\subsection{Test Proposition~\ref{np} with autoregulation}
\textbf{Simulation 4}: Consider a Markov chain $(G,M)$ that satisfies Eq.~\ref{me}. $G$ can take $0$ and $1$, and $M$ can take values in $\mathbb{Z}$. $G$ does not depend on $M$, and transition rates are both $10^{-10}$ for $G=0 \to G=1$ and $G=1\to G=0$. Restricted on $G=0$, $M$ is the same as Example~\ref{ex1} with $k=10,b=-1,c=1$. Restricted on $G=1$, $M$ is the same as Example~\ref{ex1} with $k=10,b=-1,c=2$. This process has autoregulation with the true $\text{VMR}=0.733$. For sample sizes $n=100$, $n=1000$, $n=10000$, we repeat the experiment for $10000$ times and calculate the rate that the sample VMR falls in the $95\%$ confidence interval. See Table~\ref{ts4} for results. When $n$ is not too small, Proposition~\ref{np} always produces the correct result that $\text{VMR}<1$ (autoregulation).

\begin{table}[]
	\begin{tabular}{l|lll}
		& $\text{VMR}<1$ (true) & $\text{VMR}=1$ & $\text{VMR}>1$ \\ \hline
		$n=100$   & $56.6\%$               & $43.4\%$               & $0\%$                  \\
		$n=1000$  & $100\%$               & $0\%$              & $0\%$                  \\
		$n=10000$ & $100\%$               & $0\%$              & $0\%$                 
	\end{tabular}
	\caption{Success rates for determining VMR in Simulation 4. For different sample sizes $n$, calculate the rate that VMR is different from $1$.}
	\label{ts4}
\end{table}
	
\section{Details of examples}
\label{app0}
\subsection{Details of Example~\ref{ex1}}
\label{a0.1}
In Example~\ref{ex1}, the stationary distribution $P_n$ exists, and satisfies
\begin{equation}
	[b(n-1)+k]P_{n-1}=cnP_n.
	\label{eqa}
\end{equation}

Taking summation for Eq.~(\ref{eqa}), we have
\begin{equation*}
	\begin{split}
		&\sum_{n=1}^{\infty}[b(n-1)+k]P_{n-1}=\sum_{n=1}^{\infty}cnP_n,\\
		\Rightarrow & b\sum_{n=0}^{\infty}nP_n+k\sum_{n=0}^{\infty}P_n=c\sum_{n=0}^{\infty}nP_n,\\
		\Rightarrow & b\mathbb{E}X+k=c\mathbb{E}X.
	\end{split}
\end{equation*}
Thus $\mathbb{E}X=k/(c-b)$.

Also, multiplying Eq.~(\ref{eqa}) by $n-1$ and taking summation, we have
\begin{equation*}
	\begin{split}
		&\sum_{n=1}^{\infty}[b(n-1)^2+k(n-1)]P_{n-1}=\sum_{n=1}^{\infty}cn^2P_n-\sum_{n=1}^{\infty}cnP_n,\\
		\Rightarrow & b\sum_{n=0}^{\infty}n^2P_n+k\sum_{n=0}^{\infty}nP_n=c\sum_{n=0}^{\infty}n^2P_n-c\sum_{n=0}^{\infty}nP_n,\\
		\Rightarrow & b\mathbb{E}(X^2)+k\mathbb{E}X=c\mathbb{E}(X^2)-c\mathbb{E}X.
	\end{split}
\end{equation*}
Thus $\mathbb{E}(X^2)=(c+k)k/(c-b)^2$, and the variance $\sigma^2(X)=ck/(c-b)^2$. Then $\text{VMR}(X)=1+b/(c-b)$.

\subsection{Details of Example~\ref{ex2}}
\label{a0.2}
In Example~\ref{ex2}, the process is restricted to two lines: $n_1=2$ and $n_2=2$. Since this process has no cycle, it is detailed balanced \cite{wang2020mathematical}, and the stationary distribution $P(n_1,n_2)$ satisfies \[P(k,2)=(k+1)P(k+1,2)\] 
and 
\[P(2,k)=(k+1)P(2,k+1).\]
Restricted on $n_1=2$ or $n_2=2$, the stationary distribution is Poissonian, $P(k,2)=c/(ek!)$.
After normalization, we find that $c=2e/(4e-1)$. Thus
\[P(k,2)=P(2,k)=\frac{2}{k!(4e-1)}.\]
Besides, 
\[P(2,\cdot)=\sum_{k=0}^\infty P(2,k)=\sum_{k=0}^\infty \frac{2}{k!(4e-1)}=\frac{2e}{4e-1}.\]
Then for the first gene $X$, we have
\begin{equation*}
	\begin{split}
		\mathbb{E}X & =\sum_{k\ne 2} kP(k,2)+2P(2,\cdot)=\sum_{k=0}^\infty kP(k,2)+2[P(2,\cdot)-P(2,2)] \\
		 & = \sum_{k=0}^\infty \frac{2k}{k!(4e-1)}+\frac{4e-2}{4e-1}=\frac{6e-2}{4e-1},
	\end{split}
\end{equation*}
and
\begin{equation*}
	\begin{split}
		\mathbb{E}(X^2) & =\sum_{k\ne 2} k^2P(k,2)+4P(2,\cdot)=\sum_{k=0}^\infty k^2P(k,2)+4[P(2,\cdot)-P(2,2)] \\
		& = \sum_{k=0}^\infty \frac{2k(k-1)}{k!(4e-1)}+\sum_{k=0}^\infty \frac{2k}{k!(4e-1)}+\frac{8e-4}{4e-1}=\frac{12e-4}{4e-1}.
	\end{split}
\end{equation*}
Thus 
\[\text{VMR}(X)=\frac{\mathbb{E}(X^2)}{\mathbb{E}X}-\mathbb{E}X=2-\frac{6e-2}{4e-1}=\frac{2}{4e-1}\approx 0.55.\]
Due to symmetry, the other gene also has $\text{VMR}(Y)\approx 0.55$.

\section{Details of applications on experimental data}\label{app1}

In experiments, the expression levels of genes are not directly measured as mRNA or protein counts. Rather, they are measured as cycle threshold (Ct) values or fluorescence intensity values. Such indirect measurements need to be converted. Related details can be found in other papers \cite{jia2017stochastic}.

Guo et al. \cite{guo2010resolution} measured the expression (mRNA) levels of 48 genes for mouse embryo cells at different developmental stages. We consider three groups (16-cell stage, 32-cell stage, 64-cell stage) that have more than 50 samples. Sanchez-Castillo et al. \cite{sanchez2018bayesian} used such data to infer the GRN structure, including autoregulation, but the inferred GRN only contains 39 genes. We cannot guarantee that the other 9 genes have no ancestors in the true GRN (to apply Proposition~\ref{prop2}) or these genes are not contained in directed cycles (to apply Proposition~\ref{np}). Thus we ignore those 9 genes not in this GRN. In the inferred GRN, genes BMP4, CREB312, and TCFAP2C are not contained in directed cycles. In the 16-cell stage group with 75 samples, if there is no autoregulation, then the $95\%$ confidence interval of VMR is $[0.7041,1.3470]$. BMP4 ($\text{VMR}=0.2139$), CREB312 ($\text{VMR}=0.1971$), and TCFAP2C ($\text{VMR}=0.3468$) have significantly small VMR, and we can apply Proposition~\ref{np} to infer that BMP4, CREB312, and TCFAP2C might have autoregulation. In the other two groups, these genes do not have $\text{VMR}<1$, and the results are relatively weak. Besides, in the inferred GRN, genes FN1 and HNF4A have no ancestors. For the 16-cell stage with 75 samples, the VMR of FN1 and HNF4A are $3.4522$ and $1.3599$, outside of the $95\%$ confidence interval $[0.7041,1.3470]$; for the 32-cell stage with 113 samples, the VMR of FN1 and HNF4A are $93.1070$ and $46.7688$, outside of the $95\%$ confidence interval $[0.7554,1.2784]$; for the 64-cell stage with 159 samples, the VMR of FN1 and HNF4A are $117.3059$ and $93.9589$, outside of the $95\%$ confidence interval $[0.7917,1.2322]$. Thus we can apply Proposition~\ref{prop2} to infer that FN1 and HNF4A ($\text{VMR}>1$ for all three cell groups) might have positive autoregulation. Nevertheless, it is more likely that the expressions of FN1 and HNF4A are non-autonomous, and there is no autoregulation. Sanchez-Castillo et al. \cite{sanchez2018bayesian} inferred that BMP4, HNF4A, TCFAP2C have autoregulation. Besides, there is experimental evidence that BMP4 \cite{pramono2016thrombopoietin}, HNF4A \cite{chahar2014chromatin}, TCFAP2C \cite{kidder2010examination} have autoregulation. Therefore, our inference results are partially validated.

Psaila et al. \cite{psaila2016single} measured the expression (mRNA) levels of 90 genes for human megakaryocyte-erythroid progenitor cells. Chan et al. \cite{chan2017gene} inferred the GRN structure (autoregulation not included). In the inferred GRN, genes BIM, CCND1, ECT2, PFKP have no ancestors. BIM has 214 effective samples, and VMR is $187.7$, outside of the $95\%$ confidence interval $[0.8191,1.1987]$. CCND1 has 68 effective samples, and VMR is $111.3$, outside of the $95\%$ confidence interval $[0.6905,1.3660]$. ECT2 has 56 effective samples, and VMR is $8.2$, outside of the $95\%$ confidence interval $[0.6618,1.4069]$. PFKP has 134 effective samples, and VMR is $82.1$, outside of the $95\%$ confidence interval $[0.7742,1.2543]$. Thus we can apply Proposition~\ref{prop2} to infer that BIM, CCND1, ECT2, PFKP might have positive autoregulation. Nevertheless, it is more likely that the expressions of these four genes are non-autonomous, and there is no autoregulation. There is experimental evidence that ECT2 has autoregulation \cite{hara2006cytokinesis}, which partially validates our inference results. No other gene fits the requirement of Proposition~\ref{np}.

Moignard et al. \cite{moignard2015decoding} measured the expression (mRNA) levels of 46 genes for mouse embryo cells. Chan et al. \cite{chan2017gene} inferred the GRN structure (autoregulation not included). Gene EIF2B1 has 3934 effective samples, and VMR is $0.66$, outside of the $95\%$ confidence interval $[0.9563,1.0447]$. Gene EIF2B1 has 12 effective samples, and VMR is $0.24$, outside of the $95\%$ confidence interval $[0.3469,1.9927]$. We can apply Proposition~\ref{np} to infer that EIF2B1 and HOXD8 might have autoregulation. No other gene fits the requirement of Proposition~\ref{prop2}.

Sachs et al. \cite{sachs2005causal} measured the expression (protein) levels of 11 genes in the RAF signaling pathway for human T cells. The measurements were repeated for 14 groups of cells under different interventions. Werhli et al. \cite{werhli2006comparative} inferred the GRN structure (autoregulation not included). In the inferred GRN (Fig.~\ref{grn}), PIP3 gene has no ancestor, and its VMRs in all 14 groups are larger than $5$, while the $95\%$ confidence intervals for all 14 groups are contained in $[0.8,1.2]$. Therefore, we can apply Proposition~\ref{prop2} and infer that PIP3 might have positive autoregulation. Nevertheless, it is more likely that the expression of PIP3 is non-autonomous, and there is no autoregulation. No other gene fits the requirement of Proposition~\ref{np}.

\section{Heterogeneity and VMR}
\label{app2}
\begin{proposition}
	Consider $n$ independent random variables $X_1,\ldots,X_n$ and probabilities $p_1,\ldots,p_n$ with $\sum p_i=1$. Consider an independent random variable $R$ that equals $i$ with probability $p_i$. Construct a random variable $Z$ that equals $X_i$ when $R=i$. If each $X_i$ has $\text{VMR}\ge 1$, then $Z$ has $\text{VMR}\ge 1$.
\end{proposition}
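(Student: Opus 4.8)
The plan is to reduce the statement to the law of total variance together with the elementary fact that a mixture's "between-component" variance is nonnegative. Concretely, I would first record the conditional structure: conditioned on $R=i$, the random variable $Z$ has the law of $X_i$, so $\mathbb{E}(Z\mid R=i)=\mathbb{E}(X_i)$ and $\operatorname{Var}(Z\mid R=i)=\operatorname{Var}(X_i)$. Hence $\mathbb{E}(Z)=\sum_i p_i\mathbb{E}(X_i)$.

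Next I would apply the law of total variance, $\operatorname{Var}(Z)=\mathbb{E}\big[\operatorname{Var}(Z\mid R)\big]+\operatorname{Var}\big(\mathbb{E}(Z\mid R)\big)$. The first term equals $\sum_i p_i\operatorname{Var}(X_i)$, and the second term equals $\sum_i p_i\big(\mathbb{E}(X_i)\big)^2-\big(\sum_i p_i\mathbb{E}(X_i)\big)^2$, which is nonnegative by the Cauchy–Schwarz inequality (equivalently, it is the variance of the discrete random variable taking value $\mathbb{E}(X_i)$ with probability $p_i$).

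The final step is to invoke the hypothesis $\operatorname{VMR}(X_i)\ge 1$, i.e. $\operatorname{Var}(X_i)\ge\mathbb{E}(X_i)$ for each $i$ (here all the $X_i$ are nonnegative integer-valued expression counts, so means are positive and division is harmless). Combining the pieces,
\[
\operatorname{Var}(Z)\;\ge\;\sum_i p_i\operatorname{Var}(X_i)\;\ge\;\sum_i p_i\mathbb{E}(X_i)\;=\;\mathbb{E}(Z),
\]
so $\operatorname{VMR}(Z)=\operatorname{Var}(Z)/\mathbb{E}(Z)\ge 1$.

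I do not anticipate a genuine obstacle: the proof is essentially a one-line consequence of the variance decomposition. The only points requiring a word of care are making sure $\mathbb{E}(Z)>0$ so that the VMR is well defined (which holds as soon as some $X_i$ is not identically zero), and noting that the "extrinsic" contribution $\operatorname{Var}(\mathbb{E}(Z\mid R))\ge 0$ is exactly what lets heterogeneity only increase, never decrease, the Fano factor — this is the conceptual content used in Section~\ref{con} to argue that cell heterogeneity cannot spuriously trigger Proposition~\ref{np}.
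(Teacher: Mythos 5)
Your proposal is correct and is essentially the same argument as the paper's: the paper's explicit algebraic computation for $n=2$ (followed by iteration) is precisely the within-plus-between variance decomposition that you invoke abstractly as the law of total variance. Your version handles general $n$ in one step and is arguably cleaner, but the underlying decomposition and the use of $\operatorname{Var}(X_i)\ge\mathbb{E}(X_i)$ plus nonnegativity of the between-component term are identical.
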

\begin{proof}
	We only need to prove this for $n=2$. The case for general $n$ can be proved by mixing two variables iteratively. 
	
	Consider random variables $X,Y$ and construct $Z$ that equals $X$ or $Y$ with probability $p$ or $1-p$. Since $\text{VMR}(X)\ge 1$, $\text{VMR}(Y)\ge 1$, we have $\mathbb{E}(X^2)-[\mathbb{E}(X)]^2\ge \mathbb{E}(X)$ and $\mathbb{E}(Y^2)-[\mathbb{E}(Y)]^2\ge \mathbb{E}(Y)$. Then 
	\begin{equation*}
		\begin{split}
			&\text{VMR}(Z)\\
			=&\frac{p\mathbb{E}(X^2)+(1-p)\mathbb{E}(Y^2)}{p\mathbb{E}(X)+(1-p)\mathbb{E}(Y)}\\
			&+\frac{-p^2[\mathbb{E}(X)]^2-2p(1-p)\mathbb{E}(X)\mathbb{E}(Y)-(1-p)^2[\mathbb{E}(Y)]^2}{p\mathbb{E}(X)+(1-p)\mathbb{E}(Y)}\\
			=&\frac{p\mathbb{E}(X^2)-p[\mathbb{E}(X)]^2+(1-p)\mathbb{E}(Y^2)-(1-p)[\mathbb{E}(Y)]^2}{p\mathbb{E}(X)+(1-p)\mathbb{E}(Y)}\\
			&+\frac{p(1-p)[\mathbb{E}(X)]^2-2p(1-p)\mathbb{E}(X)\mathbb{E}(Y)+p(1-p)[\mathbb{E}(Y)]^2}{p\mathbb{E}(X)+(1-p)\mathbb{E}(Y)}\\
			\ge & \frac{p\mathbb{E}(X)+(1-p)\mathbb{E}(Y)}{p\mathbb{E}(X)+(1-p)\mathbb{E}(Y)}+\frac{p(1-p)[\mathbb{E}(X)-\mathbb{E}(Y)]^2}{p\mathbb{E}(X)+(1-p)\mathbb{E}(Y)}\\
			\ge & 1.
		\end{split}
	\end{equation*}	
\end{proof}

%%=============================================%%
%% For submissions to Nature Portfolio Journals %%
%% please use the heading ``Extended Data''.   %%
%%=============================================%%

%%=============================================================%%
%% Sample for another appendix section			       %%
%%=============================================================%%

%% \section{Example of another appendix section}\label{secA2}%
%% Appendices may be used for helpful, supporting or essential material that would otherwise 
%% clutter, break up or be distracting to the text. Appendices can consist of sections, figures, 
%% tables and equations etc.

%%===========================================================================================%%
%% If you are submitting to one of the Nature Portfolio journals, using the eJP submission   %%
%% system, please include the references within the manuscript file itself. You may do this  %%
%% by copying the reference list from your .bbl file, paste it into the main manuscript .tex %%
%% file, and delete the associated \verb+\bibliography+ commands.                            %%
%%===========================================================================================%%
\bibliographystyle{acm}
\bibliography{AC}% common bib file
%% if required, the content of .bbl file can be included here once bbl is generated
%%\input sn-article.bbl

%% Default %%
%%\input sn-sample-bib.tex%

\end{document}